\newtheorem{thm}{Theorem}[section]
\newtheorem{cor}[thm]{Corollary}
\newtheorem{lem}[thm]{Lemma}
\theoremstyle{definition}
\theoremstyle{remark}
\numberwithin{equation}{section}
\begin{document}
\title[Comparing networks and reduced models]{How well do reduced models capture\\
the dynamics in models of interacting neurons ? }
\author{Yao Li}
\address{Yao Li: Department of Mathematics and Statistics, 
University of Massachusetts Amherst, USA}
\email{yaoli@math.umass.edu}

\author{Logan Chariker}
\address{Logan Chariker: Courant Institute of Mathematical
Sciences, New York University, New York, NY 10012, USA}
\email{bortkiew@gmail.com}
\thanks{LC was supported by a grant from the Swartz Foundation.}

\author{Lai-Sang Young}
\address{Lai-Sang Young: Courant Institute of Mathematical
Sciences, New York University, New York, NY 10012, USA}
\email{lsy@cims.nyu.edu}
\thanks{LSY was supported in part by NSF Grant DMS-1363161. }

\maketitle
\noindent
{\small {\bf Abstract.}  This paper introduces a class of stochastic models of
interacting neurons with emergent dynamics
similar to those seen in local cortical populations, and compares them to 
very simple reduced models driven by the same mean excitatory and inhibitory currents. 
Discrepancies in firing rates between network and reduced models were 
investigated, and mechanisms leading to these discrepancies were identified. 
Chief among them is correlations in spiking, or partial synchronization, 
working in concert with ``nonlinearities"
in the time evolution of membrane potentials. Additionally, simple random walk models and their first passage times were shown to reproduce well fluctuations 
in neuronal membrane potentials and interspike times.}

\bigskip
\section*{Introduction}

Models in neuroscience come in an extraordinarily wide range, from very simple,
seeking to describe complex neural behavior using a few coarse-grained variables, 
to extremely complex, as in Connectome type projects 
that seek to provide a complete map of all neuronal connections -- with a myriad of possibilities in
between. The modeling of individual neurons alone can vary from a single number
that describes its firing rate, to an integrate-and-fire equation, or 
a Hodgkin-Huxley model, or one that treats individual ionic channels and the 
biochemical reactions that
are triggered with each synapse. Needless to say, models that incorporate
more neuroanatomical and neurophysiological details are more realistic, but 
realistic modeling is not without cost: with greater complexity comes more 
unknown parameters corresponding to quantities that cannot be measured 
in the laboratory. 

A question that we believe has not received adequate attention is: how do models
of different levels of complexity compare? One does not expect a reduced model 
to provide the same kind
of detailed information as a large-scale network of spiking neurons, but does 
it provide useful information, rough but in the ballpark? 
If not, what causes the discrepancies? What are the mechanisms
reduced models lack that cause them to produce inaccurate results?

Clearly, these questions cannot be asked in the abstract; the answer will
depend not only on the models but on the type of questions. Until the theory
is more advanced, one will have to severely limit one's scope. In this paper, we
report on findings from a study that compares some specific kinds of models
of local neuronal populations. Our ``detailed" models are stochastic models
of interacting neurons with integrate-and-fire type dynamics; they can be
thought of as modeling the dynamical interactions that take place in local 
circuitries in the mammalian 
cortex; and our reduced models are either described by simple ODEs of 
Wilson-Cowan type or by random walks in the space of membrane
potentials. 

It soon became clear that even after specifying what types of models to use, the
question is still too broad to be tackled: networks of
interacting neurons defined by simple equations can have diverse behaviors
depending on coupling and other parameters. Some of these behaviors are direct consequence of
parameter choices; others are emergent, that is, they occur as a result of
the dynamical interaction among neurons. There is no such thing
as ``typical" network behavior.

To select representative network models in a meaningful way, it was
necessary to first identify the issues of interest. In our case, the most salient
features of our ``detailed" models that are not present in reduced models are
correlations in subthreshold and spiking activity that emerge from neuronal
interactions. Rate models do not treat individual neurons as entities, 
and do not therefore have the capacity to describe such correlations. 
This prompted us to look at not a single network model but  a collection of 
networks with different amounts of correlations,
using as a starting point a network that mimics the realistic model 
of the visual cortex in \cite{chariker2016orientation}.

There is an earlier paper that had a similar goal as ours, namely
\cite{grabska2014well}, which compared 
networks of integrate-and-fire neurons with corresponding 
mean field models. They observed, as we did, that the degree of synchrony
mattered.
An important difference between the present paper and \cite{grabska2014well} is
that we sought to identify the underlying mechanisms 
that led to discrepancies between network and reduced models. To our knowledge this is
the first time such an analysis has been carried out.

\medskip
The organization of this paper is as follows: In Section 1, we introduce
a class of stochastic models of interacting neurons. These will be our ``detailed"
models. A mathematical treatment of these models
is given in Section 2; this section can be skipped if the reader so chooses.
In Section 3, we produce some network models with different degrees of
synchrony, to be used for comparison with reduced models.
In Section 4, we considered reduced models defined by simple ODEs,
and in Section 5, we modeled fluctuations in membrane potential as random-walks.

\bigskip
\section{A stochastic model of interacting neurons}

In this section, we introduce a stochastic model of interacting neurons 
representing a local population in the cerebral cortex. Though not intended
to depict any specific part of the brain, this model has some of the features 
of realistic cortical models. Importantly, 
the dynamics are driven by neuronal interactions, with all the attendant
correlated spiking behaviors that emerge as a result of these 
interactions. We have elected to use a stochastic model because with 
the aid of ergodicity, firing rates are represented simply and convergence is fast.
The model presented here will be used in the rest of this paper to evaluate 
the performance of reduced models that are much simpler.

\medskip
\subsection{Model description} 
We consider a population of neurons connected by local circuitry, such as
neurons in an orientation domain of one layer of the primary visual cortex.
We assume that this population contains $N_{E}$ excitatory (E) neurons
and $N_{I}$ inhibitory (I) neurons, which are sparsely and homogeneously
connected. The following
assumptions are made in order to formulate a simple
Markov process that describes the spiking activity of this population. 

\bigskip
\begin{itemize}
\item[(1)] We assume for simplicity that the membrane potential of a neuron 
 takes only finitely many discrete values.
\item[(2)] Each neuron receives synaptic input from an external source in the form 
of Poisson kicks; these kicks are independent from neuron to neuron.
 \item[(3)] When the membrane potential of a neuron reaches a certain threshold,
 the neuron spikes, after which it goes to a refractory state and remains there for an exponentially distributed random time.
\item[(4)] Every time an E (respectively I) neuron in the population spikes:  

(a) a random set of postsynaptic neurons is chosen;
  
(b) the membrane potential of each chosen postsynaptic neuron goes up 

\qquad (respectively down) after an exponentially distributed random time.
 \end{itemize}

\bigskip
More precisely, we assume that in our population there are $N_E$ 
excitatory neurons,
labeled $1,2,\cdots, N_E$, and $N_I$ inhibitory neurons, labeled $N_E+1, N_E+2,
\cdots, N_E + N_I$. The membrane potential of neuron $i$, denoted $V_i$, takes
values in 
$$\Gamma := \{-M_r, -M_r+1, \cdots, -1, 0,1,2, \cdots, M\} \cup \{\mathcal R\}\ .
$$
Here $M, M_r \in \mathbb Z^+$; $M$ represents the threshold for spiking, 
$-M_r$ the inhibitory reversal potential, and $\mathcal R$ the refractory state. 
When $V_i$ reaches $M$, the neuron is said to {\it spike}, and $V_i$ is
instantaneously set to $\mathcal R$, where it remains for a duration given by
an exponential random variable with mean $\tau_{\mathcal R}>0$. When $V_i$ 
leaves $\mathcal R$, it goes to $0$.

We describe first the effects of the ``external drive", external in the sense that
this input comes from outside of the population in question; for example it can be thought of as coming from a region of cortex closer to sensory input. 
This input is delivered in the form of impulsive kicks, arriving at random
(Poissonian) times, the Poisson processes being independent from neuron to 
neuron. For simplicity, we assume 
there are two parameters, $\lambda^E, \lambda^I>0$, representing 
the rates of the Poisson kicks to the E and I neurons in the 
population. These rates are low in background; they 
increase with the strength of the stimulation.
When a kick arrives and $V_i \ne \mathcal R$, $V_i$ jumps up by $1$, until it reaches $M$, at which time the neuron spikes. Kicks received
by neuron $i$ when $V_i=\mathcal  R$ have no effect.

Each neuron also receives synaptic input from within the population.
We assume that an excitatory kick received by a neuron ``takes
effect" (the meaning of which will be made precise momentarily) at  
a random time after its arrival. This delay is given by an exponential
random variable with mean $\tau^E$; it is independent from spike to spike and
from neuron to neuron. Similarly,
an inhibitory kick received takes effect after a random time with mean $\tau^I$.
We let $H^E_i$ denote the number of E-kicks received by neuron $i$ that
has not yet taken effect, and let $H^I_i$ denote the corresponding number of
I-kicks. That is to say, every time an E-kick is received by neuron $i$, $H^E_i$
goes up by $1$; every time an E-kick received by neurons $i$ takes effect, 
$H^E_i$ goes down by $1$,
and so on. The state of neuron $i$ at any one moment in time
 is then described by the triplet 
$(V_i, H^E_i, H^I_i)$. We will refer to $H^E_i$ and $H^I_i$ as the numbers of kicks
``in play"; these two numbers may be viewed 
as stand-ins for the E and I-conductances of neuron $i$.

We now explain what it means for an E or I-kick to take effect.
Each E or I-kick received by neuron $i$ carries with it an (independent)
exponential clock as discussed above. When this clock rings,
what happens depends on 
whether or not $V_i = \mathcal R$. If $V_i = \mathcal R$,
then $V_i$ is unchanged. If $V_i \ne \mathcal R$, then
$V_i$ is modified instantaneously according to the numbers $S_{Q,Q'}, \ 
Q, Q' \in \{ E, I\}$, where $S_{Q,Q'}$ denotes the synaptic coupling when
a neuron of type $Q'$ synapses on a neuron of type $Q$. In the case of an
I-kick, this modification also depends on $V_i$.

Here is how $V_i$ is modified in the case of an E-kick, i.e., when $Q'=E$: 
Assume first
that the numbers $S_{Q,Q'}$ are nonnegative integers. When an E-neuron
spikes and it synapses on neuron $i$, $V_i$ jumps up by $S_{EE}$
if $i$ is an E-neuron, by $S_{IE}$ if $i$ is an I-neuron; and if the jump takes 
$V_i$ to an integer $ \ge M$, it simply goes to $\mathcal R$. 
For non-integer values of $S_{Q,Q'}$, let $p =
\left \lfloor S_{Q,Q'} \right \rfloor$ be the greatest integer
less than or equal to $S_{Q,Q'}$. Then $S_{Q,Q'} = p + u$ 
where $u$ be a Bernoulli random
 variable taking values in $\{0,1\}$ with $\mathbb{P}[ u = 1] = S_{Q,Q'} - p$ 
   independent of all other random variables in the model. 

When I-spikes take effect, the rule is analogous to that for E-spikes,
with the following exception: $V_i$ jumps down instead of up by an amount proportional to $V_i+M_r$, where $-M_r$ is the reversal potential
for I-currents. The numbers $S_{Q,Q'}$ are assumed
to be positive, and for definiteness, let us declare $S_{Q,I}$ to be the size 
of the jump at $V_i=M$, so that in general, the size of the
jump is $$
S_{Q, I}(V_{i}) := (V_i+M_r)/(M+M_r)*S_{Q,I} \,.
$$ 

We remark that we have incorporated into the numbers $S_{Q,Q'}$
the changes in {\it current} in the postsynaptic neuron. 
We have assumed that E-currents are independent of the membrane potential
of the postsynaptic neuron, which is not unreasonable since the reversal potential
for excitatory current is quite large ($>4M$ in our setup).
Changes in I-current are more sensitive to membrane potential, and that is
reflected in the formula above.
  
It remains to stipulate the ``connectivity" of the network, i.e., the set of 
neurons postsynaptic to each neuron. We assume for
simplicity that connectivity in our model is random and time-dependent, so that 
every time a neuron spikes, a random set of postsynaptic neurons
is chosen anew (independently of history). More precisely, for $Q,Q' \in \{E,I\}$, 
we let
 $P_{Q,Q'} \in [0,1]$ be the probability that a neuron of type $Q$ is postsynaptic 
when a neuron of type $Q'$ spikes, and the set of postsynaptic neurons is
determined by a coin flip with these probabilities following each spike. 
We do not pretend this assumption is realistic; in the real brain
connectivities between neurons are fixed and far from random. But unlike
longer range projections, which tend to target specific regions or even neurons, 
exact connectivities within local populations are not known
to be important. This is a rationale behind our assumption of random
postsynaptic neurons. Another is that this assumption simplifies the analysis considerably. In particular, it makes the behaviors of all neurons in the 
E-population, respectively the I-population, statistically indistinguishable.

This completes our description of the model.

\medskip
\subsection{Parameters used in numerical results}

There is an analytical and a numerical part to our results.
Our rigorous results apply to all parameter choices that satisfy the 
hypotheses in the theorems or propositions. We give a sense here of
the parameters we use in simulations: We generally take
$N_E$ to range from $300$ to $1000$, and $N_I= \frac13 N_E$, 
as is typically the case in local populations in the real cortex.
We set $M = 100$, $M_r=66$, the ratio of $M_r$ to
$M$ reflecting biologically realistic ranges of membrane potentials. We fix
$P_{EE} = 0.15$, $P_{IE} = P_{EI} = 0.5$ and $P_{II}=0.4$, these
numbers chosen to resemble the usual connectivities in networks such as those 
in the visual cortex. There is less experimental guidance for 
the synaptic couplings $S_{Q,Q'}$; we take them to be $2-6$, out of the $100$
units between reset and threshold (cf $S_{EE} =5$ means it takes 20 
consecutive E-kicks to drive a neuron from $V_i=0$ to $V_i=M$).
We set $\tau_{\mathcal R}=2-3$ (ms), 
consistent with usual estimates for refractory periods, and 
set $\tau^E$ and $\tau^I$ to be a few ms, with $\tau^E < \tau^I$, 
as AMPA is known to act faster than GABA and both act within milliseconds.
We will, on occasion, deliberately choose parameters that are a little unbiological
 to make a point. 
Finally, the Poisson rates of the external drive, $\lambda^E, \lambda^I$ will 
be varied as we study the model's responses to drives of various strengths.

\bigskip
Readers who wish to bypass the technical mathematics pertaining to 
the class of models described above can proceed without difficulty 
to Section 3.

\bigskip
\section{Theoretical results and proofs} 

Some basic results for the model presented in Sect. 1.1 are stated in
Sect. 2.1, and their proofs are given in Sect. 2.3, after a brief review of
probabilistic preliminaries. 

\medskip 
\subsection{Statement of results}

The model described above is that of a Markov jump process $\Phi_{t}$ 
on a countable state space
$$
\mathbf{X} = ( \Gamma \times \mathbb{Z}_{+} \times \mathbb{Z}_{+}
)^{N_{E} + N_{I}}\, ,
$$
as the state of neuron $i$ is given by the triplet $(V_i, H^E_i, H^I_i)$ where
$V_i \in \Gamma$ and $H^E_i, H^I_i \in \mathbb{Z}_{+}:= \{0,1,2,\dots\}$.
We assume the paths of $\Phi_{t}$ are c\`adl\`ag. 
The transition probabilities of $\Phi_{t}$ are denoted by
$P^{t}(\mathbf{x}, \mathbf{y})$, i.e.,
$$
  P^{t}(\mathbf{x},\mathbf{y}) = \mathbb{P}[ \Phi_{t} = \mathbf{y}
  \,|\, \Phi_{0} = \mathbf{x}] \,. 
$$ 
The left operator of $P^{t}$ acting on a probability distribution
$\mu$ is
$$
  \mu P^{t}  (\mathbf{x}) = \sum_{\mathbf{y} \in \mathbf{X}} \mu(
  \mathbf{y})P^{t}( \mathbf{y}, \mathbf{x}) \,.
$$
The right operator of $P^{t}$ acting on an observable $\xi: \mathbf{X}
\rightarrow \mathbb{R}$ is
$$
  P^{t} \xi ( \mathbf{x}) = \sum_{\mathbf{y} \in \mathbf{X}} P^{t}(
  \mathbf{x}, \mathbf{y})\xi( \mathbf{y}) \,.
$$

Our first result pertains to
the existence and uniqueness, hence ergodicity, of invariant measure
for the Markov chain $\Phi_{t}$. Notice that as $H^E_i$ and $H^I_i$ can
take arbitrarily large values, the state space for $\Phi_{t}$ is noncompact, 
and such Markov chains need not possess invariant probabilities
in general.

For $U: \mathbf{X} \mapsto (0, \infty)$, we define the $U$-weighted
total variation norm of a signed measure $\mu$ on $\mathcal{B}(\mathbf{X})$,
the Borel $\sigma$-algebra of $\mathbf{X}$, to be
$$
  \| \mu \|_{U} = \sum_{\mathbf{x} \in \mathbf{X}} U(\mathbf{x})|\mu( \mathbf{x})| \,,
$$
and let 
$$
  L_{U}(\mathbf{X}) = \{ \mu \mbox{ on } \mathcal{B}(\mathbf{X}) \,|\,
  \| \mu \|_{U} < \infty \} \,.
$$
 To state the main result, we need the
following definitions. For each state $\mathbf{x} \in \mathbf{X}$,
we let 
$$
  H^E(\mathbf{x}) = \sum_{i = 1}^{N_{E}+N_I} H^{E}_{i}
  \qquad \mbox{and} \qquad
  H^I(\mathbf{x}) = \sum_{i = 1}^{N_{E}+N_I} H^{I}_{i}
  $$
be the total number of E-kicks and I-kicks in play.
  
\begin{thm}
\label{invariant} The Markov chain
$\Phi_{t}$ admits a unique invariant probability measure $\pi \in
L_{U}( \mathbf{X})$ where 
$$
U(\mathbf{x}) = H^E( \mathbf{x}) + H^I( \mathbf{x}) + 1\ .
$$
This stationary measure is ergodic with exponential
convergence to equilibrium, equivalently exponential decay of time correlations.
More precisely, there exist constants
$C_{1}, C_{2}>0$ and
$r \in (0, 1)$, such that 
\begin{itemize}
  \item[(a)] for any initial distribution $\mu \in L_{U}(\mathbf{X})$,
$$
  \| \mu P^{t} - \pi \|_{U} \leq C_{1} r^{t} \| \mu - \pi \|_{U}\,;
$$
\item[(b)] 
for any observable $\xi$ with $\|\xi \|_{U} < \infty$,
$$
  \| P^{t} \xi - \pi( \xi) \|_{U} \leq C_{2} r^{t} \| \xi - \pi( \xi )
  \|_{U} \,,
$$ 
where
$$
  \pi( \xi) = \sum_{ \mathbf{x} \in \mathbf{X}} \pi(\mathbf{x}) \xi(
  \mathbf{x}) \,.
$$
\end{itemize}
\end{thm}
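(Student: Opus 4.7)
The plan is to invoke the Meyn--Tweedie framework (Harris' theorem) for geometric ergodicity of a Markov chain on a countable state space in a weighted total-variation norm. Two ingredients are needed: a Foster--Lyapunov drift inequality for $U(\mathbf{x}) = H^E(\mathbf{x}) + H^I(\mathbf{x}) + 1$, and a uniform minorization (Doeblin lower bound) on a sublevel set $C_K = \{U \le K\}$. Combined with $\psi$-irreducibility and aperiodicity (both immediate from the Poissonian external drive and the exponential refractory clock), these give the existence of a unique invariant $\pi \in L_U(\mathbf{X})$ and geometric contraction in the $U$-weighted norm, from which both (a) and (b) follow by standard duality between the left and right actions of $P^t$.

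For the drift step, I would bound the expected change in $U$ under the Poisson-clocked dynamics. The destruction mechanism is clean: each E-kick in play takes effect at rate $1/\tau^E$ and decrements $H^E$ by one, independently of everything else (similarly for I). The creation mechanism is more delicate, since each spike of an E-neuron adds a random number of E-kicks with expectation $N_E P_{EE} + N_I P_{IE}$, and a naive generator bound on the spike rate using the rate of $V$-increases can diverge with $H^E$. The key observation is that the refractory mechanism caps the expected number of spikes of any one neuron in $[0,\delta]$: the number of refractory-exits of neuron $j$ in this interval is dominated by $1 + \mathrm{Poisson}(\delta/\tau_{\mathcal R})$, since between spikes the neuron must exit $\mathcal R$ at rate $1/\tau_{\mathcal R}$. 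I would therefore work not with the infinitesimal generator but with the $\delta$-skeleton $P^{\delta}$, bounding $\mathbb{E}_{\mathbf x}[U(\Phi_\delta)]$ by combining the refractory cap for the inflow term with the factor $e^{-\delta/\tau^{*}}$ (where $\tau^{*}=\max\{\tau^E,\tau^I\}$) for the outflow, to obtain $\mathbb{E}_{\mathbf x}[U(\Phi_\delta)] \le \gamma U(\mathbf x) + K_0$ with $\gamma<1$ and $K_0$ a constant depending on $\delta$, $N_E$, $N_I$, $\tau^E$, $\tau^I$, $\tau_{\mathcal R}$, $\lambda^E$, $\lambda^I$ and the coupling strengths.

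For the minorization, I would fix a reference state $\mathbf{x}_{*}$ at which every $V_i=0$ and $H_i^E=H_i^I=0$, and show that from any $\mathbf x \in C_K$ the chain reaches $\mathbf{x}_{*}$ by time $T$ with probability at least some $\epsilon=\epsilon(K,T)>0$. A concrete realization: on $[0,T]$ suppress all external kicks (probability $e^{-(N_E\lambda^E+N_I\lambda^I)T}>0$); let the boundedly many kicks currently in play take effect in an order that avoids spikes, using the independence of the internal exponential clocks and the boundedness of the $V_i$; and force any neuron that does spike through the refractory sequence back to $V_i=0$ before time $T$. Since $C_K$ is a finite subset of $\mathbf{X}$ and the chain is easily seen to be irreducible within $C_K$, the infimum of these reach probabilities over $\mathbf x\in C_K$ is strictly positive, yielding a small set and hence the minorization $P^T(\mathbf x,\cdot)\ge \epsilon \delta_{\mathbf x_{*}}(\cdot)$ on $C_K$.

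The main obstacle is the drift step: the instantaneous firing rate of a neuron is not bounded by $1/\tau_{\mathcal R}$ --- it can be as large as $H_i^E/\tau^E$ when $V_i$ sits just below threshold --- so one cannot use the refractory bound inside an infinitesimal-generator calculation. Passing to the $\delta$-skeleton and using a time-integrated refractory cap on the per-neuron spike counts is, I think, the cleanest way to turn the heuristic ``one neuron can fire at most $\sim 1/\tau_{\mathcal R}$ times per unit time'' into a usable inequality; the cost is that the Harris conclusion is first obtained for $P^{n\delta}$ and then extended back to continuous time via the standard observation that $t\mapsto \|\mu P^t - \pi\|_U$ is non-increasing along the semigroup once the skeleton contracts.
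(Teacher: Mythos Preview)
Your overall strategy matches the paper's: verify the hypotheses of Harris' theorem (drift plus minorization) for a time-$h$ skeleton of $\Phi_t$, then transfer the conclusion to continuous time. Your drift argument---working with $P^\delta$ rather than the generator, and capping the per-neuron spike count on $[0,\delta]$ by $1+\mathrm{Poisson}(\delta/\tau_{\mathcal R})$ via the mandatory refractory sojourn---is exactly the paper's Lemma~2.4.

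Two points need repair. First, your minorization construction does not land at $\mathbf{x}_*$. If you suppress the external drive and let the pending kicks fire (even in an order avoiding spikes), each $V_i$ ends at whatever value results from its initial state and the kicks it happened to receive, not at $0$; a neuron with $V_i=50$ and no pending kicks simply stays put. Moreover, any spike that does occur injects new kicks into other neurons' $H$-counts, so ``boundedly many kicks in play'' is not automatic along your path. The paper avoids both issues by taking as reference the state $\mathbf{x}_0$ with every $V_i=\mathcal R$: one \emph{uses} (rather than suppresses) the external Poisson drive to push each neuron to threshold and hence into $\mathcal R$ on $[0,h/2]$, keeps all neurons in $\mathcal R$ on $[h/2,h]$, and lets every pending kick take effect there harmlessly (kicks do nothing to a neuron in $\mathcal R$, so no cascades). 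Since at most $N_E+N_I$ spikes occur in this scenario, the total number of kicks ever in play is bounded by a constant depending only on $b$ and the network size, and only finitely many independent clock events need to go the right way. Your fallback (finiteness of $C_K$ plus irreducibility) would also suffice, but note it is irreducibility on the full state space, not ``within $C_K$'', that is relevant, and one still owes an explicit positive-probability path to the chosen reference state.

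Second, the claim that $t\mapsto\|\mu P^t-\pi\|_U$ is non-increasing is false in general for the $U$-weighted norm (it holds only for unweighted total variation). What the paper uses instead is that the drift inequality gives a uniform bound $P^sU\le BU$ for $s\in[0,h)$, hence $\|(\mu-\pi)P^s\|_U\le B\|\mu-\pi\|_U$; writing $t=nh+s$ with $0\le s<h$ and applying the skeleton contraction to $(\mu P^s)P^{nh}$ then yields the continuous-time exponential rate $BC\,r^{n}$.
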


\bigskip
For $T >0$, we let $N_i(T)$ denote the number
of times neuron $i$ spikes on the time interval $[0,T]$. Equivalently, 
$N_i(T)$ is the random variable that counts the number of visits of $V_i$
to state $M$ (or to state $\mathcal R$). 
Then the {\it mean firing rate} of neuron
$i$ is defined to be
$$
\mbox{fr}(i) := \lim_{T \to \infty} \frac1T  \mathbb E_\pi [N_i(T)] =  \mathbb E_\pi [N_i(1)]
$$
where $\mathbb E_\pi$ is the expectation with respect to the invariant
probability $\pi$ given by Theorem 2.1. As defined, $\mbox{fr}(i)$ is a number
in $[0, \infty]$, and the second equality follows
from the invariance of $\pi$. Since all E-neurons have the same
firing rates, and the same is true for all I-neurons, we denote the mean E- 
and I-firing rates of $\Phi_t$ by $\bar F_E$ and $\bar F_I$ respectively. 

In the next corollary, we assume $S_{Q,Q'}$
are integers, and leave the formulation of results for noninteger values of 
$S_{Q,Q'}$ to the reader. 
For each state $\mathbf{x} \in \mathbf{X}$, we define
\begin{eqnarray*}
F_E^{\rm tot}(\mathbf{x}) & = & \sum_{i=1}^{N_E} 
\left(\frac{1}{\tau^E} H^E_i \cdot \mathbf{1}_{ \{ V_{i} \geq M - S_{EE}\} }(\mathbf{x}) + 
\lambda^E \cdot \mathbf{1}_{\{V_{i} = M - 1 \}}(\mathbf{x}) \right)\ ,\\
F_I^{\rm tot}(\mathbf{x}) & = &  \sum_{i=N_E+1}^{N_E+N_I} 
\left(\frac{1}{\tau^E} H^E_i \cdot \mathbf{1}_{ \{ V_{i} \geq M - S_{IE}\} }(\mathbf{x}) + 
\lambda^I \cdot \mathbf{1}_{\{V_{i} = M - 1 \}}(\mathbf{x}) \right)\ .
\end{eqnarray*}

\begin{cor}
\label{cor1} The firing rates $\bar F_E,
\bar F_I < \infty$ and satisfy
$$
\bar F_E = \frac{1}{N_E} \sum_{\mathbf{x} \in \mathbf{X}} 
 F^{\rm tot}_E( \mathbf{x}) \pi(\mathbf{x})\ 
\quad \mbox{and} \quad 
\bar F_I = \frac{1}{N_I} \sum_{\mathbf{x} \in \mathbf{X}} 
F^{\rm tot}_I( \mathbf{x}) \pi(\mathbf{x}) \,.
$$
\end{cor}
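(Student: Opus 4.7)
The proof hinges on identifying the stochastic intensity of the spike-counting process $N_i(t)$ and applying a Dynkin-type formula in stationarity. My first step is to read off, from the dynamics of Section~1.1, the instantaneous spiking rate $\lambda_i(\mathbf{x})$ of each neuron $i$ as a function of the network state. A spike at neuron $i$ can occur through exactly two mechanisms: (a)~a pending E-kick takes effect and pushes $V_i$ to or past $M$, which requires $V_i \geq M - S_{EE}$ if $i$ is an E-neuron (resp.\ $V_i \geq M - S_{IE}$ if $i$ is an I-neuron) and occurs at total rate $H^E_i/\tau^E$, since each of the $H^E_i$ pending kicks carries an independent $\mathrm{Exp}(1/\tau^E)$ clock; (b)~an external Poisson kick arrives while $V_i = M-1$, contributing rate $\lambda^E$ or $\lambda^I$ respectively. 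Summing the two contributions, one immediately recognizes $\sum_{i=1}^{N_E}\lambda_i = F_E^{\rm tot}$ and the analogous identity for $F_I^{\rm tot}$.

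Next I would argue that $M_i(t) := N_i(t) - \int_0^t \lambda_i(\Phi_s)\,ds$ is a local martingale; this is standard for counting processes associated with a Markov jump process, and can be checked directly from the infinitesimal generator of $\Phi_t$. Because $\lambda_i(\mathbf{x}) \leq (1/\tau^E)\,H^E_i(\mathbf{x}) + \max(\lambda^E,\lambda^I) \leq C\,U(\mathbf{x})$ for a constant $C$, Theorem~\ref{invariant} gives $\pi(\lambda_i) < \infty$. Stationarity yields $\mathbb{E}_\pi[\lambda_i(\Phi_s)] = \pi(\lambda_i)$ for every $s \geq 0$, and Fubini then gives
\[
\mathbb{E}_\pi[N_i(T)] \;=\; \int_0^T \pi(\lambda_i)\,ds \;=\; T\, \pi(\lambda_i)\,.
\]
Setting $T=1$ yields $\mathrm{fr}(i) = \pi(\lambda_i) < \infty$.

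To conclude, I would use the permutation symmetry of the model: the dynamics are invariant under any permutation of the E-labels $\{1,\dots,N_E\}$ and of the I-labels $\{N_E+1,\dots,N_E+N_I\}$, so by the uniqueness assertion of Theorem~\ref{invariant}, the invariant measure $\pi$ inherits these symmetries. Hence $\pi(\lambda_i) = \bar F_E$ for every $i \leq N_E$ and $\pi(\lambda_i) = \bar F_I$ for every $i > N_E$, and averaging gives
\[
\bar F_E \;=\; \frac{1}{N_E}\sum_{i=1}^{N_E}\pi(\lambda_i) \;=\; \frac{1}{N_E}\,\pi(F_E^{\rm tot}) \;=\; \frac{1}{N_E}\sum_{\mathbf{x} \in \mathbf{X}} F_E^{\rm tot}(\mathbf{x})\,\pi(\mathbf{x})\,,
\]
together with the analogous formula for $\bar F_I$.

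The only delicate point is promoting $M_i(t)$ from a local to a genuine martingale under $\mathbb{P}_\pi$, since $N_i$ is not uniformly bounded. I would handle this by localizing at the stopping times $T_n := \inf\{t : H^E(\Phi_t)+H^I(\Phi_t) \geq n\}$, computing $\mathbb{E}_\pi[N_i(T \wedge T_n)] = \mathbb{E}_\pi\!\int_0^{T\wedge T_n}\lambda_i(\Phi_s)\,ds$, and passing to the limit $n \to \infty$ via monotone convergence on both sides; the $U$-integrability of $\pi$ supplied by Theorem~\ref{invariant} is precisely what keeps the right-hand side finite and justifies the exchange of limit and integral.
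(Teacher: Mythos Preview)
Your argument is correct and shares the same core identification as the paper: the instantaneous spiking intensity of neuron $i$ is exactly $\lambda_i$, and summing over E-neurons gives $F_E^{\rm tot}$. The technical execution, however, differs in two respects. For finiteness, the paper takes the shorter route of bounding $\mathbb{E}_{\mathbf{x}}[N_i(1)] \le 1 + 1/\tau_{\mathcal R}$ uniformly in $\mathbf{x}$ via the refractory period (exactly as in Lemma~\ref{cond1}), rather than invoking $\pi \in L_U$; this avoids the integrability and localization discussion entirely. For the formula itself, the paper writes the expected spike count as a time average $\frac{1}{T}\int_0^T F_E^{\rm tot}(\Phi_t)\,dt$ and appeals to the Ergodic Theorem, whereas you work directly in stationarity via the compensator and Fubini. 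Your route is arguably more rigorous as written (the paper's passage from the infinitesimal spike probability to the time-average formula is left informal), and your explicit use of permutation symmetry and uniqueness of $\pi$ to equate all $\mathrm{fr}(i)$ is a nice touch that the paper simply asserts. Either approach is fine; yours is closer to the standard point-process formalism, the paper's is shorter.
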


\medskip
\subsection{Probabilistic Preliminaries} We review the following general results
on geometric ergodicity.
Let $\Psi_{n}$ be a Markov chain on a countable state space $(X, \mathcal{B})$ 
with transition kernels
$\mathcal{P}(x, \cdot)$, and let $W: X \rightarrow [1, \infty)$. 
Consider the following conditions:

\begin{itemize}
  \item[(a)] There exist constants $K \geq 0$ and $\gamma \in (0, 1)$
    such that
$$
  (\mathcal{P}W)(x) \leq \gamma W(x) + K
$$ 
for all $x \in X$.
\item[(b)] There exists a constant $\alpha \in (0, 1)$ and a
  probability measure $\nu$ so that
$$
  \inf_{x\in C} \mathcal{P}(x, \cdot) \geq \alpha \nu(\cdot) \,,
$$
with $C = \{x \in X \, | \, W(x) \leq R \}$ for some $R > 2K(1 -
\gamma)$, where $K $ and $\gamma$ are from (a).
\end{itemize}

\medskip

The following was first proved in \cite{meyn2009markov}. The version
we use is proved in \cite{hairer2011yet}.

\begin{thm}
\label{hairer}
Assume (a) and (b). Then $\Psi_{n}$ admits a unique invariant measure
$\pi \in L_{W}(X)$. In addition, there exist constants $C, C' > 0$ and
$r \in (0, 1)$ such that (ii) for all $\mu, \nu \in L_{W}(X)$, 
$$
  \| \mu \mathcal{P}^{n} - \nu \mathcal{P}^{n} \|_{W} \leq C r^{n} \|
  \mu - \nu \|_{W} \,,
$$ 
and (i) for all $\xi$ with $\|\xi\|_{W} < \infty$,
$$
  \| \mathcal{P}^{n} \xi - \pi( \xi) \|_{W} \leq C' r^{n} \| \xi -
  \pi(\xi) \|_{W} \,.
$$
\end{thm}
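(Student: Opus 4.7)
I would follow the coupling-and-contraction proof of Harris' theorem due to Hairer and Mattingly. On $X$, introduce the weighted distance
\[
d_\beta(x,y) \;=\; \bigl(2 + \beta W(x) + \beta W(y)\bigr)\,\mathbf{1}_{x \neq y},
\]
where $\beta > 0$ is a tuning parameter to be fixed below, and let $\rho_\beta$ denote the associated Kantorovich--Wasserstein distance on probabilities with finite $W$-moment. A short duality calculation shows $\rho_\beta$ is equivalent, up to $\beta$-dependent constants, to the weighted total variation norm $\|\cdot\|_W$ on the affine subspace of signed measures of zero total mass, so that it suffices to prove a one-step contraction of $\mathcal{P}$ in $\rho_\beta$.

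The heart of the argument is to exhibit $\beta > 0$ and $\bar r \in (0,1)$ with
\[
\rho_\beta\bigl(\mathcal P(x,\cdot),\mathcal P(y,\cdot)\bigr) \;\le\; \bar r\, d_\beta(x,y) \qquad \text{for all } x,y\in X.
\]
I would construct a coupling $(\Psi_1^x,\Psi_1^y)$ and split into two cases. When $W(x) + W(y) > R$, the independent coupling together with the drift condition (a) gives $E[d_\beta(\Psi_1^x,\Psi_1^y)] \le 2 + \beta\gamma(W(x)+W(y)) + 2\beta K$; dividing by $d_\beta(x,y) \ge 2 + \beta R$ and using the assumption $R > 2K/(1-\gamma)$ (equivalently $\gamma R + 2K < R$) bounds the ratio by some $\bar r_1 < 1$. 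When $W(x)+W(y) \le R$, I use the minorization (b) to build a maximal coupling in which $\Psi_1^x = \Psi_1^y$ with probability at least $\alpha$: on this event $d_\beta$ collapses to $0$, while on its complement the drift inequality bounds the conditional expectation of $d_\beta$ by $2 + \beta(\gamma R + 2K)$. Choosing $\beta$ small enough that the coalescence gain $\alpha$ dominates $\beta(\gamma R + 2K)/2$ yields $\bar r_2 < 1$, and $\bar r := \max(\bar r_1,\bar r_2) < 1$ then works uniformly.

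Iterating the contraction and invoking Banach's fixed-point theorem in the complete metric space $(L_W(X),\rho_\beta)$ produces a unique invariant probability $\pi \in L_W(X)$ and the exponential estimate $\rho_\beta(\mu \mathcal P^n, \nu \mathcal P^n) \le \bar r^n \rho_\beta(\mu,\nu)$; translating back via Step~1 gives conclusion (ii) with $r = \bar r$ and an appropriate $C$. Conclusion (i) then follows from (ii) by duality: writing $(\mathcal P^n\xi - \pi(\xi))(x) = (\delta_x\mathcal P^n - \pi)(\xi - \pi(\xi))$, the pairing bound $|\mu(\eta)| \le \|\eta\|_W \|\mu\|_W$ for mass-zero $\mu$ combined with $\|\delta_x\mathcal P^n - \pi\|_W \le Cr^n(W(x) + \pi(W))$ and division by $W(x) \ge 1$ yields the stated $\|\mathcal P^n\xi - \pi(\xi)\|_W$ bound with $C' = C(1+\pi(W))$.

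The main obstacle is the joint tuning of $R$ and $\beta$ in the contraction step. Enlarging $R$ sharpens the drift contraction in the outer case but typically worsens the minorization constant $\alpha$ (since the small set grows), while $\beta$ must be small enough for the inner case to contract yet not so small that the equivalence $\rho_\beta \asymp \|\cdot\|_W$ degrades the final constants $C, C'$. Carefully balancing these trade-offs---and pushing the Wasserstein contraction through to the weighted total variation statement---is precisely the substance of Harris' theorem in its modern coupling formulation.
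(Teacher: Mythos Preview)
The paper does not actually prove this theorem: it is stated in Section~2.2 as a probabilistic preliminary and attributed to \cite{meyn2009markov} and, in the form used, to \cite{hairer2011yet}. Your proposal is precisely a sketch of the Hairer--Mattingly coupling proof from \cite{hairer2011yet}, so in that sense you are reproducing exactly the argument the paper invokes by citation.

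Your sketch is correct in outline. Two small remarks. First, the hypothesis in the paper reads $R > 2K(1-\gamma)$, whereas your argument (and the Hairer--Mattingly original) requires $R > 2K/(1-\gamma)$; you have silently used the correct inequality, and the paper's version appears to be a typo. Second, your closing paragraph overstates the difficulty: in the theorem as stated, $R$, $K$, $\gamma$, and $\alpha$ are all \emph{given} by hypotheses (a) and (b), so there is no joint tuning of $R$ and $\beta$ to perform --- only $\beta$ is free, and choosing it small enough (depending on the fixed $\alpha$, $\gamma$, $K$, $R$) is straightforward. The concern that ``enlarging $R$ worsens $\alpha$'' is relevant when one is \emph{verifying} (a) and (b) for a specific chain, but not in the proof of this abstract theorem.
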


\medskip
\subsection{Proof of Theorem 2.1 and Corollary 2.2.}
For a step size $h > 0$, we define the time-$h$ sample chain as
$\Phi^{h}_{n} = \Phi_{nh}$, and  drop the superscript $h$ when it leads
to no confusion. We first show for this discrete-time chain that 
$U(\mathbf{x}) = H^E(\mathbf{x}) + H^I(\mathbf{x})+1$ is a natural 
Lyapunov function that satisfies conditions (a) and (b) in the previous subsection. 

\begin{lem}
\label{cond1}
For $h > 0$ sufficiently small, there exist constants $K
> 0$ and $\gamma \in (0, 1)$, such that
$$
  P^{h}U \leq \gamma U + K \,.
$$
\end{lem}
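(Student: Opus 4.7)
My plan is to prove the drift inequality by introducing an auxiliary Lyapunov function that absorbs the spike-generated growth of $H^E$ and $H^I$ using the refractory period. A routine enumeration of the jumps described in Section 1.1 yields the exact generator expressions
$$ \mathcal{L}H^E(\mathbf{x}) = C_E^E\,\Lambda_E(\mathbf{x}) - H^E(\mathbf{x})/\tau^E, \qquad \mathcal{L}H^I(\mathbf{x}) = C_I^I\,\Lambda_I(\mathbf{x}) - H^I(\mathbf{x})/\tau^I, $$
where $C_E^E := N_E P_{EE} + N_I P_{IE}$ and $C_I^I := N_E P_{EI} + N_I P_{II}$ are the mean numbers of postsynaptic kicks created per E- and I-spike respectively, and $\Lambda_E(\mathbf{x}), \Lambda_I(\mathbf{x})$ are the state-dependent firing rates. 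This will not give a drift bound for $U = H^E + H^I + 1$ directly: at states with many near-threshold E-neurons holding E-kicks in play one has $\Lambda_E(\mathbf{x})$ of order $H^E/\tau^E$, while $C_E^E$ is typically much larger than $1$, so the positive spike term dominates the negative $-H^E/\tau^E$ one.

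The remedy I have in mind is to work instead with
$$ U'(\mathbf{x}) := H^E(\mathbf{x}) + H^I(\mathbf{x}) + A \, (N_E + N_I - N_R(\mathbf{x})), $$
where $N_R(\mathbf{x})$ counts neurons currently in the refractory state $\mathcal{R}$ and $A := \max(C_E^E, C_I^I)$. Since a spike transitions a neuron into refractory and each refractory neuron exits at rate $1/\tau_{\mathcal{R}}$, one computes $\mathcal{L}(N - N_R)(\mathbf{x}) = -\Lambda(\mathbf{x}) + N_R(\mathbf{x})/\tau_{\mathcal{R}}$ with $\Lambda = \Lambda_E + \Lambda_I$ and $N = N_E + N_I$. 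Adding everything up, the coefficients of $\Lambda_E$ and $\Lambda_I$ in $\mathcal{L}U'$ become $C_E^E - A \le 0$ and $C_I^I - A \le 0$, so the problematic positive contributions cancel pointwise. Bounding $N_R \le N$ and using $H^E + H^I \ge U' - AN$ then produces the geometric drift
$$ \mathcal{L}U'(\mathbf{x}) \le -\delta_0 \, U'(\mathbf{x}) + K_0, \qquad \delta_0 := \min(1/\tau^E, 1/\tau^I), $$
with $K_0$ an explicit constant depending on $A$, $N$, $\delta_0$ and $\tau_{\mathcal{R}}$.

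From this pointwise inequality I plan to apply Dynkin's formula followed by a Gronwall argument, obtaining $P^h U'(\mathbf{x}) \le e^{-\delta_0 h}\,U'(\mathbf{x}) + K_0/\delta_0$ for every $h > 0$. Finally, the identity $U' = U - 1 + A(N - N_R)$ and the bound $0 \le N - N_R \le N$ render $U$ and $U'$ equivalent up to a universal additive constant, so the same inequality transfers to $U$ with $\gamma = e^{-\delta_0 h} \in (0, 1)$ and some new constant $K$, and it in fact holds for all $h > 0$. The ``$h$ sufficiently small'' qualifier in the statement appears to be convenient for the minorization condition (b) of Section 2.2 rather than strictly necessary for this drift step.

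The main obstacle is the second step — recognizing that the refractory count belongs in the Lyapunov function. The refractory period bounds per-neuron firing rates by $1/\tau_{\mathcal{R}}$ only as a time average, not pointwise, so $\mathcal{L}U$ by itself fails to dominate the spike-generated $C_E^E \Lambda_E + C_I^I \Lambda_I$ at bad states. Adding $-A N_R$ is exactly the device that encodes this averaging at the level of the generator: each spike is paid for not only in newly produced kicks but also in a new refractory neuron, and for $A$ large enough this extra cost cancels the spike-generated kick production pointwise.
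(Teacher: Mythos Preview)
Your argument is correct, but it is a genuinely different route from the paper's. The paper does \emph{not} work with the generator at all: it estimates $P^{h}U(\mathbf{x})-U(\mathbf{x})$ directly for small $h$ by writing it as $-\mathbb{E}_{\mathbf{x}}[N_{out}]+\mathbb{E}_{\mathbf{x}}[N_{in}]$, where $N_{out}$ counts kicks already in play at time $0$ that take effect on $(0,h]$ and $N_{in}$ counts new spikes. The decrease is bounded below by $(H^{E}+H^{I})(1-e^{-h/\max\{\tau^{E},\tau^{I}\}})$, and the increase is controlled by observing that each neuron can fire at most $1+\mathrm{Pois}(h/\tau_{\mathcal R})$ times on $(0,h]$ because of refractory, giving $\mathbb{E}_{\mathbf{x}}[N_{in}]\le (N_{E}+N_{I})(1+h/\tau_{\mathcal R})$. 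Thus the refractory period enters as a \emph{time-averaged} bound on spike counts over the window $(0,h]$, and no auxiliary Lyapunov function is needed; this is why the paper requires $h$ small. Your approach instead encodes the same refractory bookkeeping \emph{pointwise} at the generator level via the extra term $A(N-N_{R})$, which is the natural device when the instantaneous spike rate is unbounded. The paper's argument is shorter and avoids the Dynkin/Gronwall/localization machinery; your argument is more systematic, yields the drift inequality for every $h>0$ rather than only small $h$, and makes explicit the cancellation between kick production and refractory entry that the paper's time-window estimate achieves implicitly.
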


Intuitively, this is true because on a short time interval $(0,h)$, $U$ decreases
at a rate proportional to $H^E+H^I$ as kicks received prior to time $0$
take effect, while it can increase at most by a fixed constant related 
to $N_E +N_I$ due to the presence of the refractory period.

\begin{proof}
We have
$$
  P^{h}U( \mathbf{x}) = \mathbb{E}_{\mathbf{x}}[ U( \Phi_{h})] \leq U(
  \mathbf{x}) - \mathbb{E}_{\mathbf{x}}[ N_{out}] + \mathbb{E}_{\mathbf{x}}[N_{in}] \,,
$$
where $N_{out}$ is the number of kicks from $H^E+H^I$ that takes
 effect on $(0,h]$, and 
$N_{in}$ is the number of new spikes produced during the time period $(0, h]$.

To estimate $N_{out}$, recall that the clocks associated with each of the $H^E+H^I$
kicks are independent, with each E-kick taking effect on $(0,h]$
with probability $(1-e^{- h/ \tau^E})$ and each I-kick taking effect
on $(0,h]$ with probability $(1-e^{-h/\tau^I})$. This gives
$$
 \mathbb{E}_{\mathbf{x}}[ N_{out}] \ge (H^E + H^I) ( 1 - e^{-h/\max\{\tau^E, \tau^I\}} ) \geq  \frac{1}{2 \max\{\tau^E, \tau^I\}}  \ h \ U( \mathbf{x})
$$
for $h$ sufficiently small. 

To estimate $N_{in}$, consider neuron $i$, and let $f_i$ be the number
of spikes generated by neuron $i$ during the time period $(0, h]$. 
Since after each spike neuron $i$ spends an exponential time with mean
$\tau_{\mathcal R}$
in state $\mathcal R$, we have
$$
  \mathbb{E}_{\mathbf{x}}[f_{i}] \leq 1 + \mathbb{E}[ \mbox{Poisson
    distribution with parameter } h/\tau_{\mathcal R}] = 1 + h/\tau_{\mathcal R} \,.
$$
Hence
$$
  \mathbb{E}_{\mathbf{x}}[N_{in}] \leq (N_{E} + N_{I})(1 + h/\tau_{\mathcal R}) \,.
$$
The proof is completed by letting 
$$
\gamma = 1 - h/(2 \max\{\tau^E, \tau^I\}) \qquad \mbox{and} \qquad
K =  (N_{E} + N_{I})(1 + h/\tau_{\mathcal R})\ .
$$
\end{proof}

For $b \in \mathbb R$, let
$$
C_{b} = \{  \mathbf{x} \in \mathbf{X} | H^E( \mathbf{x}) + H^I( \mathbf{x})
\leq b \} \,. 
$$

\begin{lem}
\label{cond2}
Let $\mathbf{x}_{0}$ be the state that $H^E = H^I = 0$ and  $V_i = \mathcal R$
for all $i$. Then for any $h > 0$, there exists a constant $c$ depending on $b$ 
and $h$ such that for all $\mathbf{x} \in C_{b}$,
$$
  P^{h}(\mathbf{x}, \mathbf{x}_{0}) > c \,.
$$
\end{lem}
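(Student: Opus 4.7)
The plan is to exploit the finiteness of $C_b$ and then construct, for each $\mathbf{x} \in C_b$, an explicit positive-probability event carrying $\Phi_0 = \mathbf{x}$ to $\Phi_h = \mathbf{x}_0$. Because $\Gamma$ is finite and the constraint $H^E(\mathbf{x}) + H^I(\mathbf{x}) \le b$ forces each $H^E_i$ and $H^I_i$ on $C_b$ to lie in $\{0, 1, \ldots, b\}$, the set $C_b$ is finite; hence it suffices to prove $P^h(\mathbf{x}, \mathbf{x}_0) > 0$ for every $\mathbf{x} \in C_b$ and take $c := \min_{\mathbf{x} \in C_b} P^h(\mathbf{x}, \mathbf{x}_0)$.

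Fix $\mathbf{x} \in C_b$, partition $(0, h]$ into three equal subintervals $I_1, I_2, I_3$, and consider the event $\mathcal{E}$ defined as follows. \emph{Phase 1 (on $I_1$)}: no initial in-play kick fires; every neuron with $V_i(0) = \mathcal{R}$ keeps its refractory clock silent; every neuron with $V_i(0) \neq \mathcal{R}$ receives from its external Poisson process at least $M + M_r + 1$ kicks during $I_1$, enough to push $V_i$ up to $M$ and produce a spike that sends it into $\mathcal{R}$; and no kick generated by a Phase 1 spike fires during $I_1$. \emph{Phase 2 (on $I_2$)}: every in-play kick present at time $h/3$, both those inherited from $\mathbf{x}$ and those generated during Phase 1, fires, and no refractory clock rings. \emph{Phase 3 (on $I_3$)}: no refractory clock rings. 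Because external kicks arriving while $V_i = \mathcal{R}$ are ignored and internal kicks that fire while $V_i = \mathcal{R}$ are absorbed without modifying $V_i$ or generating further kicks, on $\mathcal{E}$ we have $\Phi_h = \mathbf{x}_0$.

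To conclude, only finitely many independent exponential clocks and Poisson processes are implicated: at most $b$ initial in-play clocks, at most $(N_E + N_I)^2$ clocks attached to kicks generated in Phase 1, $N_E + N_I$ refractory clocks, and the $N_E + N_I$ external Poisson processes. Each ``clock does not ring within $h$'' event is bounded below by $e^{-h/\min(\tau^E, \tau^I)}$ or $e^{-h/\tau_{\mathcal{R}}}$; each ``in-play clock rings during $I_2$'' event by $1 - e^{-(h/3)/\max(\tau^E, \tau^I)}$; and each ``Poisson process delivers at least $M + M_r + 1$ arrivals in $I_1$'' event by an explicit positive quantity depending on $\lambda^E$ or $\lambda^I$, $M$, $M_r$, and $h$. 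The main obstacle is that clocks attached to kicks generated during Phase 1 begin running at random spike times, so their independence from the external arrivals that triggered the spikes is not manifest; the cleanest resolution is to apply the strong Markov property at each Phase 1 spike, after which the new clocks are fresh exponentials whose survival probabilities over the remainder of $I_1$ can be bounded uniformly below by $e^{-h/\min(\tau^E, \tau^I)}$. Multiplying these finitely many strictly positive lower bounds furnishes the desired $c > 0$.
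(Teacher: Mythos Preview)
Your approach is the same as the paper's: construct an explicit positive-probability scenario in which external Poisson kicks drive every neuron into $\mathcal{R}$, after which all pending kicks fire harmlessly while everyone is in refractory. The paper does this in two phases rather than three and argues the uniform lower bound directly (observing that the total number of kicks in play during the scenario is bounded in terms of $b$ and $N_E+N_I$) rather than by first invoking the finiteness of $C_b$, but these are cosmetic variations.

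One correction is needed: your event $\mathcal{E}$ must also stipulate that the refractory clock of any neuron that \emph{enters} $\mathcal{R}$ during Phase~1 remains silent for the rest of $I_1$. The paper imposes exactly this (``once in $\mathcal{R}$, $V_i$ remains there through time $t=h$''). Without it, a neuron could spike, exit $\mathcal{R}$, and be climbing again at time $h/3$; then neither your conclusion $\Phi_h=\mathbf{x}_0$ nor your bound of at most $(N_E+N_I)^2$ Phase-1 kicks is guaranteed on $\mathcal{E}$. Adding this single condition (one more exponential clock per neuron, bounded below by $e^{-h/\tau_{\mathcal{R}}}$) repairs the argument completely.
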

\begin{proof}
It is sufficient to construct, for each $\mathbf{x} \in C_{b}$, a sample path 
that goes from $\mathbf{x}$ to
$\mathbf{x}_{0}$ with a uniform lower bound on its probability. 
Consider the following sequence of events.

\begin{itemize}
  \item[(i)] A sequence of Poisson kicks increases each $V_i$
   to the threshold value $M$, hence puts $V_i=\mathcal R$, by time $t=h/2$; once in
$\mathcal R$, $V_i$ remains there through time $t = h$.  
No kick in play takes effect on $[0,h/2]$.
\item[(ii)] All kicks in play at time $h/2$ take effect on $(h/2, h]$, but that
has no effect as all $V_i$  are in $\mathcal R$.
\end{itemize}

To prove that the events above occur with a positive probability bounded
from below, observe that in the scenario described, the number of kicks in play
never exceeds $b+N_E +N_I$, hence only a finite number of conditions are
imposed. 
\end{proof}

Lemmas \ref{cond1} and \ref{cond2} together imply Theorem \ref{invariant}.

\begin{proof}[Proof of Theorem \ref{invariant}] 
Choose step size $h$ as in Lemma \ref{cond1}. It follows from Lemma
\ref{cond1} and \ref{cond2} that the assumptions in Theorem
\ref{hairer} are satisfied. Therefore, the discrete-time chain $\Phi^{h}$
admits a unique invariant probability measure $\pi_{h}$ in $L_{U}(
\mathbf{X})$.

\medskip

We will show that $\pi_{h}$ is invariant under $\Phi_t$ for any $t > 0$. This is
because $\Phi_{t}$ satisfies the ``continuity at zero'' condition, meaning for
any probability measure $\mu$ on $\mathbf{X}$,
$$
  \lim_{t \rightarrow 0}\| \mu P^{t} - \mu \|_{TV} = 0 \,.
$$
To see this, let $\epsilon > 0$ be an arbitrary small number. Since
$\mu$ is finite, there exists a finite set $A\subset \mathbf{X}$ such
that $\mu(A) > 1 - \epsilon/4$. By the definition of $A$, clock rates for initial values in $A$ are uniformly
bounded. Therefore, one can find a sufficiently small $\delta > 0$,
such that $\mathbb{P}[ \mbox{ no clock rings on } [0, \delta) ] \geq 1
- \epsilon/4$. For any set $U \subset \mathbf{X}$, we have
\begin{eqnarray*}
(\mu P^{\delta}) (U)  & =  & \sum_{\mathbf{x} \in \mathbf{X}}
P^{\delta}( \mathbf{x} , U) \mu( \mathbf{x} )\\
&=& \sum_{\mathbf{x}\in A \cap U} P^{\delta}(\mathbf{x}, U) \mu( \mathbf{x} ) + \sum_{\mathbf{x} \in A -
  U} P^{\delta}(\mathbf{x}, U) \mu( \mathbf{x} ) + \sum_{\mathbf{x} \in A_{c}} P^{\delta}(\mathbf{x}, U) \mu( \mathbf{x}) \\
&=& \mu(A \cap U) - a_{1} + a_{2} + a_{3} \,,
\end{eqnarray*}
where
\begin{eqnarray*}
a_{1} & =  & \sum_{\mathbf{x} \in A \cap U} (1 - P^{\delta}(\mathbf{x}, U) )\mu(\mathbf{x}) \leq \frac{\epsilon}{4} \mu( A \cap U) \leq \frac{\epsilon}{4}\\
a_{2} &=& \sum_{\mathbf{x} \in A \setminus U} P^{\delta}(\mathbf{x}, U) \mu( \mathbf{x}) \leq \frac{\epsilon}{4} \mu( A \setminus U) \leq \frac{\epsilon}{4}\\
a_{3} &=& \sum_{\mathbf{x} \in A^{c}} \frac{\epsilon}{4} \mu( A \cap U) \leq
\frac{\epsilon}{4} \leq \mu( A^{c}) \leq \frac{\epsilon}{4} \,.
\end{eqnarray*}
In addition we have $\mu(U) - \mu(A \cap U) \leq \mu(A^{c}) <
\frac{\epsilon}{4}$. Hence
$$
  | (\mu P^{\delta})(U) - \mu(U) | < \epsilon
$$
for any $U \subset \mathbb{R}^{N}_{+}$. By the definition of the total
variation norm, we have
$$
  \| \mu P^{\delta} - \mu \|_{TV} \leq \epsilon \,.
$$
This implies the ``continuity at zero'' condition.

Notice that $\pi_{h}$ is invariant for any $\Psi^{hj/k}_{n}$, where
$j, k \in \mathbb{Z}^{+}$ (Theorem 10.4.5 of \cite{meyn2009markov}). Then without loss of generality, assume $t/h \notin
\mathbb{Q}$. By the density of orbits in irrational rotations, there
exists sequences $a_{n}$, $b_{n} \in \mathbb{Z}^{+}$ such that
$$
  d_{n} := t - \frac{a_{n}}{b_{n}}h \rightarrow 0
$$
from right. Then 
$$
  \pi_{h} P^{t} = \pi_{h}P^{\frac{a_{n}}{b_{n}} h}
  P^{d_{n}} \,.
$$
Therefore,
$$
\| \pi_{h}P^{t} - \pi_{h} \|_{TV} \leq   \lim_{n \rightarrow \infty}
\| \pi_{h}P^{d_{n}} - \pi_{h} \|_{TV}  = 0
$$
by the ``continuity at zero'' condition. Hence
$\pi_{h}$ is invariant with respect to $P^{t}$. 

\medskip

It remains to prove the exponential convergence for any $t > 0$. By
Lemma \ref{cond1}, there exists $B$ such that $P^{t}V \leq
B V$ for all $t < h$. Let $n$ be the largest integer that is smaller than $t/h$ and let $d = t
- hn$. Then we have

\begin{eqnarray*}
  \| \mu P^{t} - \nu P^{t}\|_{V} & = & \| (\mu P^{d}) P^{nh} - (\nu
  P^{d}) P^{hn}\|_{V}\\
&  = & C r^{n} \cdot \| \mu P^{d} - \nu P^{d} \|_{V}\  \leq \ BC r^{n}
  \| \mu - \nu \|_{V} \,.
\end{eqnarray*}

Similarly, 
\begin{eqnarray*}
  \| P^{t} \xi - \pi(\xi)\|_{V} & = & \| P^{nh} (P^{d \xi})- P^{hn} (P^{d}
  \pi(\xi))\|_{V}\\
 & = & C r^{n} \cdot \| P^{d} \xi - P^{d} \pi(\xi) \|_{V} \ \leq \ BC r^{n}
  \| \xi - \pi(\xi) \|_{V} \,.
\end{eqnarray*}
This completes the proof.
\end{proof}

\medskip
\begin{proof}[Proof of Corollary 2.2] As in the proof of Lemma 2.4, we have
that for every $\mathbf{x}$,
$$
  \mathbb{E}_{\mathbf{x}}[N_i(1)] \leq 1 + \mathbb{E}[ \mbox{ Poisson
    distribution with parameter } 1/\tau_{\mathcal R}] = 1 + 1/\tau_{\mathcal R} \,.
$$ 
Thus $\mathrm{fr}(i) = \mathbb{E}_{\pi}[N_i(1)] < \infty$.

Now for an infinitesimally small $h>0$, the probability that a spike is fired
by some E-neuron on the time interval $[0,h]$ starting from $\mathbf{x}$
is $F_E^{\rm tot}(\mathbf{x}) \cdot h$. Thus starting from $\mathbf{x} = \Phi_0$,
$$
\bar F_E = \frac{1}{N_E} \cdot \lim_{T\to \infty} \frac1T \int_0^T 
F_E^{\rm tot}(\Phi_t) dt
$$
which by the Ergodic Theorem is equal to the asserted quantity. 
The same argument applies to $\bar F_I$.
\end{proof}

\medskip
\section{Three populations with different degrees of synchrony}

One of the reasons why a rigorous analysis of the dynamics of populations 
of interacting neurons is challenging  is that these interactions produce 
correlations in temporal dynamics, and coordinated dynamical events
 in turn influence
how the neurons interact. Correlations in neuronal dynamics clearly
exist in the real brain: LFP oscillations in the gamma band, which are found 
in many parts of cortex, are likely the result of self-organized, coordinated subthreshold activity.  A related phenomenon is clustering or partial 
synchronization of spikes. An extreme version of it involving full-population
spikes is known as PING \cite{borgers2003synchronization}. 
Milder, more subtle, forms 
of partial synchronization that appear to be more consistent with 
experimental observations \cite{henrie2005lfp} 
in cortex were later identified in the modeling work of
\cite{rangan2013emergent} and studied in \cite{rangan2013dynamics, chariker2015emergent}; similar phenomena were reproduced in the realistic models in \cite{chariker2016orientation} and \cite{chariker2017rhythm}. 

In other developments in neuroscience, many reduced models 
(e.g. \cite{wilson1972excitatory, wilson1973mathematical, knight1996dynamical,
amit1997model, omurtag2000dynamics, haskell2001population}) 
have been proposed, giving rise to simpler ways to deduce the firing rates 
of local populations. 
As discussed in the Introduction, one of the aims of this paper is to investigate
the accuracy of these predictions, and the degree to which they may (or may not) be affected by emergent correlations in network activity. In the rest of this paper, 
we will make believe that the stochastic model presented in Sect. 1.1 is 
``real", and its firing rates will be compared to those given
by some of the simplest reduced models.

While our intentions are clear, a question arises immediately with regard to which
parameters that define the network models in Sect. 1.1 to use. Numerical
exploration tells us quickly that there is no such thing as a typical network 
model, and that different parameter choices can lead to a wide range of
dynamical behaviors. Obviously we cannot exhaust all possibilities in
a high dimensional parameter space. Since our main concern is the effect 
of correlated spiking, we will introduce in this section
three examples of network models exhibiting different degrees of correlated 
spiking or partial synchrony, and use them for illustration in the next section.

\medskip
\subsection{Three example networks}

We introduce here three models of the type described in Sect. 1.1, with
 identical parameters except for $\tau^E$ and $\tau^I$, the expected time
between the occurrence of a spike and when it takes effect. As we will see,
different choices of these values  will lead to different degrees of synchrony.  

We first give the parameters common to all three models: $N_E$ and $N_I$,
the numbers of E and I neurons in the population, are 300 and 100 respectively. 
The connectivities $P_{QQ'}$ are as in Sect. 2.1, namely
$P_{EE} = 0.15$, $P_{IE} = P_{EI} = 0.5$ and $P_{II}=0.4$. The synaptic
weights $S_{QQ'}$ are as follows: $S_{EE}=5$, $S_{IE}=2$, and $S_{EI}=
S_{II}=4.91$ (recall that this corresponds to coupling weights when $V_i$,
the membrane potential of the postsynaptic neuron, is at threshold, i.e. at 
$V_i=100$). The expected time to stay in the refractory state, $\mathcal R$,
is $2.5$ ms, and the external drive rates to E and I neurons will be taken
to be equal, i.e., $\lambda^E=\lambda^I$, and a range of values of the drive
will be considered.

The parameters above, for the most part, are similar to those used in the
realistic models of visual cortex \cite{chariker2016orientation}. From 
\cite{chariker2017rhythm}, we learned also that
varying the rise and decay times of E and I conductances, especially the relation
between the two, is a very effective way to change the degree of synchrony of
a local population. We now use this technique to produce the following
three examples:

\bigskip \noindent
(1) The ``{\bf homogeneous}" network, abbreviated as ``{\bf Hom}" in the figures:
$$
\tau^{EE} = 4 , \quad \tau^{IE} = 1.2 , \quad \tau^I = 4.5 \qquad \mbox{(in ms)}
$$
(2) The ``{\bf regular}" network, abbreviated as ``{\bf Reg}" in the figures:
$$
\tau^{EE} = 2.0 , \quad \tau^{IE} = 1.2 , \quad \tau^I = 4.5 \qquad \mbox{(in ms)}
$$
(3) The ``{\bf synchronized}" network, abbreviated as ``{\bf Sync}" in the figures:
$$
\tau^{EE} = 1.3 , \quad \tau^{IE} = 0.95 , \quad \tau^I = 4.5 \qquad \mbox{(in ms)}
$$

Instead of $\tau^E$, we have used here $\tau^{EE}$ and $\tau^{IE}$ to denote
the expected times between the occurrence of an E-spike and when it takes 
effect in E, respectively I, neurons. These numbers are roughly consistent with
biological values: $\tau^I>\tau^{EE}, \tau^{IE}$ is consistent with the fact that
GABA acts more slowly than AMPA, and $\tau^{EE} > \tau^{IE}$ is consistent with 
the fact that E-spikes can synapse on dendrites of E-neurons, taking 
a bit longer for its effect to reach the soma, while they synapse directly on
the soma of I-cells. That aside, there is nothing special about these choices,
other than that they produce the distinct degrees of synchrony that we would like
to have.

Figure \ref{fig1} shows the E- and I-firing rates of the three networks 
above in response to a range of drives of magnitude 
$\lambda = \lambda^E=\lambda^I$ spikes/sec. 
Both firing rates increase monotonically as a function of drive.
We think of $\lambda \sim 1000$ spikes/sec as low drive, or spontaneous activity, 
and $\lambda \ge 6000$ spikes/sec as strong drive.

\begin{figure}[htbp]
\centerline{\includegraphics[width = 0.7\textwidth]{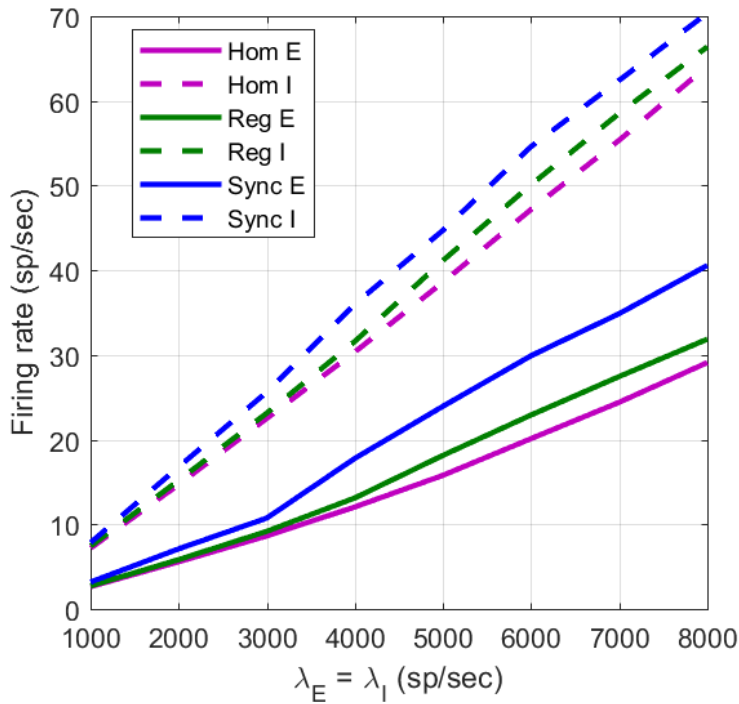}}
\caption{ {\bf Firing rates of three example networks in response to
increasing drive.} In the $x$-axis, $\lambda=\lambda^E=\lambda^I$
is external drive. The graphs labeled ``Hom", ``Reg" and ``Sync" give
the firing rates of the corresponding networks.}
\label{fig1}
\end{figure}

\medskip
\subsection{Statistics of the ``Hom", ``Reg" and ``Sync" networks.} Here we present more detailed
information on the three networks defined in the last subsection, focusing on
their responses to relatively strong drive, at $\lambda=7000$ spikes/sec.
Fig 2 shows, for each network, spike rasters, summed spike plots, and 
correlation diagrams. 

The raster-plots are self-explanatory. Clearly visible in the rasters of
the Reg and Synch networks are coordinated spiking that 
emulate gamma band oscillations (at 30-90 Hz) in the real cortex \cite{henrie2005lfp}. 
These spiking events are entirely emergent, or 
self-organized, in the sense that there is nothing built into the network 
architecture or dynamics that lead directly to these spiking events. 
Comparing the frequency of these events with mean E-firing rate
(given above the rasters),
one sees  that most E-neurons do not participate in all spiking events.

The summed spike plots give the fractions of the E-population spiking
in $5$-ms time bins. Though they show the same behaviors as the rasters, 
we have included
these plots because rasters can be deceiving when 
used to depict the spiking activity of hundreds of neurons: what 
appear to be population spikes may in fact involve fewer neurons than
the rasters suggest. For the Sync network, one sees from the summed
spike plots that most spiking events do not involve the entire population, 
even though the rasters may give an impression to the contrary. 
As for the Reg network, Fig 2 shows that the larger spiking events 
usually involve no more than $30-40$\% of
the population. Nor do identical fractions of neurons spike in each 5 ms bin
in the Hom network: there is some amount of
synchronization that is entirely emergent, natural and hard to avoid.

The correlation diagrams describe not correlations between pairs of 
neurons but how the spiking of individual neurons are correlated to that
of the rest of the population. We describe precisely what is plotted in, for example,
the second histogram from the left, labeled ``Conditioned on E at $t=0$" with
an ``I" in the box. 
Here we run the network for 10-20 seconds. Each time an E-neuron spikes,
we record all the I-spikes fired within $15$ ms of its occurrence, both before
and after, computing the fraction of the I-population spiking in each
$1$-ms time bin on this time interval. This is then averaged over all E-spikes
that occur during the simulation.
The other three plots are interpreted analogously. A comparison of these plots 
for the three networks confirms the increasing amounts of correlated spiking, or 
partial synchrony, that are clearly visible in the rasters as we
go from the Hom to the Sync network.

\begin{figure}[htbp]
\centerline{\includegraphics[trim=0in .2in 0in .2in, clip,width = \linewidth]{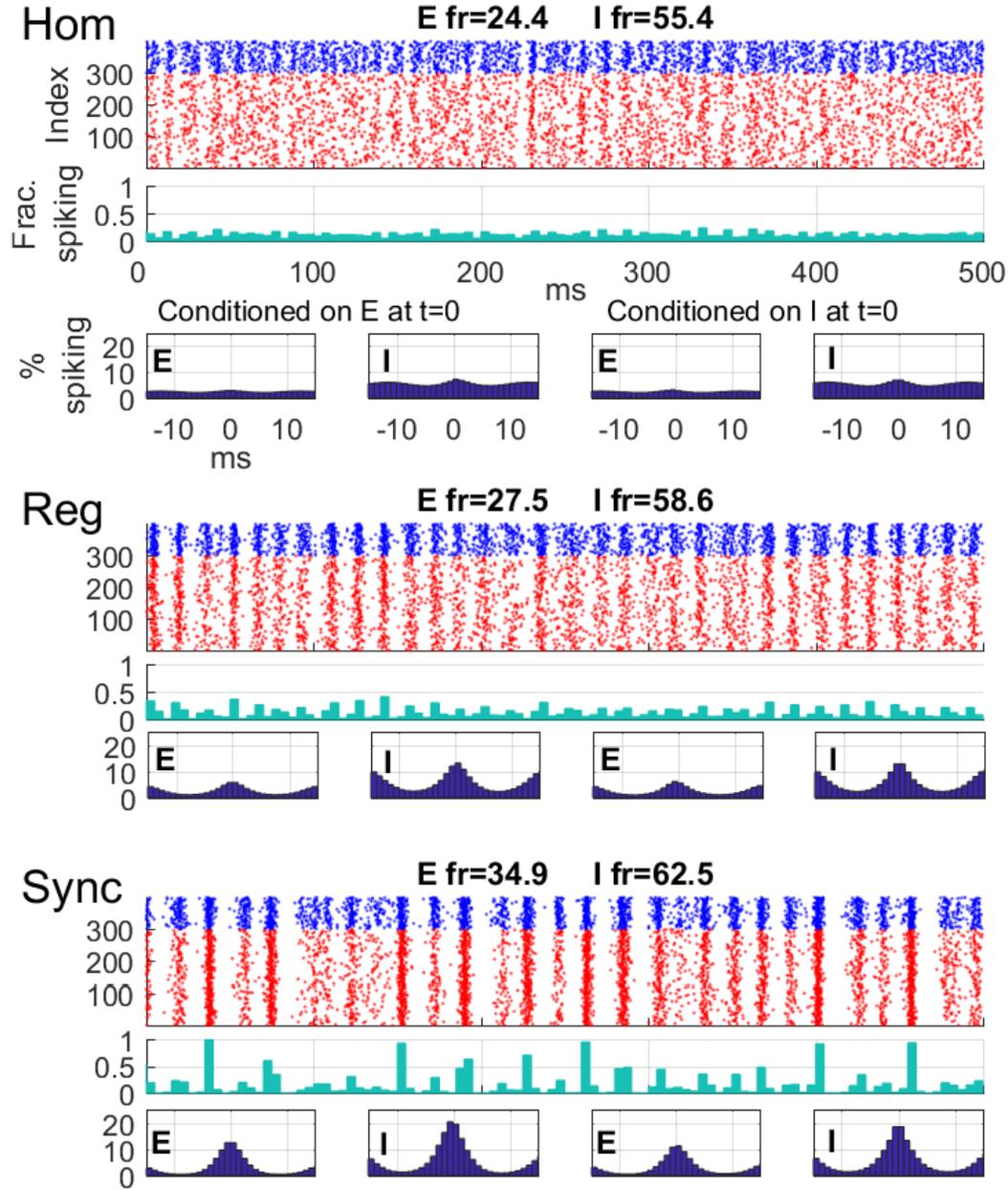}}
\caption{{\bf Statistics of the Hom, Reg and Sync networks:} 
All statistics are collected in response to a strong drive of
$\lambda^E=\lambda^I=7000$ spikes/sec. For each network we show 
in the top panel rasters (E-neurons in red, I-neurons in blue) over 
a 1/2 sec time interval; mean firing rates are shown above
the rasters. In the middle panel are 
corresponding summed spike plots for E-neurons, showing 
the percentage of the E-population
spiking in each 5 ms window. Below the summed spike plots are correlation diagrams: a histogram labeled ``$X$ conditioned on $Y$ at $t=0$", 
$X,Y = E, I$, shows the percentage of the $X$-population spiking on 1 ms 
windows on the time interval $t \in [-15, 15]$ ms conditioned on a $Y$-spike 
occurring at time $t=0$. Labels for the Reg and Sync networks, which are
omitted, are to be read as identical
to those for the Hom network. }
\end{figure}

\bigskip
\noindent
{\bf Analysis.} We have presented three example networks 
defined by essentially the same parameters yet exhibiting remarkably dissimilar
spiking patterns, from very homogeneous to strongly synchronized. 
The only differences in network parameters are $\tau^E$ and 
$\tau^I$, which describe how long after 
one neuron synapses on another before the effect of the spike is fully felt.
Even here, the differences are subtle: the homogeneous and regular networks
differ only in $\tau^{EE}$ and by only $2$ ms, while all three $\tau$-parameters
differ by $<1$ ms between the regular and synchronized networks. 

Two points here are of note. First, when under drive, the most salient kind 
of correlations among neurons in the model are semi-regular bursts of 
elevated spiking occurring with frequencies in the gamma band (not to
suggest that these are the only correlations). Second, our simulations confirm that
small changes in $\tau^E$ and $\tau^I$, intended to represent
how AMPA and GABA affect conductance  properties in the postsynaptic
neuron in the real brain, have a strong impact on the amount of correlated
spiking or degree of synchronization in the local
population. 

The mechanism behind gamma band oscillations has been much studied.
An extreme form of it involving full population spikes, called PING, was first 
described in \cite{borgers2003synchronization}. Milder and more realistic forms  producing spectral power
densities much closer to data were studied in
\cite{rangan2013dynamics, chariker2015emergent} and \cite{chariker2017rhythm}. We refer
the reader to these papers for a more detailed discussion. Very briefly, these
rhythms occur as a result of recurrent excitation and the fact that the time course 
for GABA is generally a few ms slower than that of AMPA, allowing 
some fraction of the E- and
I-population to spike before a sufficient amount of GABA is released
to curb the spiking activity.

Finally, to be clear, we do not claim that the examples above are representative of
all network models. If anything, they illustrate that neuronal interactions 
can produce a wide range of dynamical characteristics, and that these 
characteristics can depend on model parameters in subtle ways.
But with partial synchronization being one of the most salient features
of driven neuronal dynamics, these three examples allow us close-up looks 
into how reduced models perform when used to predict the dynamics of
networks with different degrees of synchronization.

\bigskip

\section{Firing rates: comparison of reduced and network models}

Up until now, we have focused on models defined by populations
of interacting neurons. We now turn to the use of reduced models to 
estimate their firing rates. Three very simple ODEs describing the evolution of membrane potential are proposed in Sect. 4.1.
No novelty is claimed here with regard to these reduced models; 
many similar ideas for deducing firing rates
by balancing one quantity or another have been 
proposed in the literature (see e.g. \cite{wilson1972excitatory, wilson1973mathematical,
knight1996dynamical, amit1997dynamics, amit1997model, 
van1998chaotic, omurtag2000dynamics, gerstner2000population,
 brunel2000dynamics,  haskell2001population,  cai2004effective, cai2006kinetic,
rangan2006maximum}). 
 The reduced models we have selected for consideration
were chosen for their simplicity, and the fact that they allow a direct comparison 
with the network models studied in Sections 1--3. 
Such a comparison is carried out in Sect. 4.2, followed by an analysis
of the discrepancies.

\medskip
\subsection{Three reduced models and their firing rates} The models below 
will be referred to by their names in italics in later discussion.

\medskip \noindent
(1)  {\it Linear model}. In this first reduced model we regard the membrane potential $v$ of each neuron as drifting upward at constant speed, i.e., 
\begin{equation} \label{voltage}
\frac{dv}{dt} = F^+ - F^- \ , \qquad \mbox{for} \ v \in [0,1]\ .
\end{equation}
Upon reaching $1$, $v$ is instantaneously reset to $0$, and the climb starts
immediately (with no refractory period). Here $F^+$ and
$F^-$ are the forces that drive $v$ upward, respectively downward. 
They are connected to the quantities that describe the network models in Sect. 1.1
as follows: Let 
\begin{eqnarray}
C_{EE} = N_{E}P_{EE}S_{EE}\ , & \qquad &  C_{IE} = N_{E}P_{IE}S_{IE}\ , \\\nonumber
C_{EI} = N_{I}P_{EI}\hat{S}_{EI} \ , & \qquad &  C_{II} = N_{I}P_{II}\hat{S}_{II}\ ,
\end{eqnarray}
where $\hat{S}_{EI}$ and $\hat{S}_{II}$ are to be taken to be the value of $S_{EI}(v)$ and $S_{II}(v)$
at $v=M/2=50$ in the model in Sect. 1.1. Then for E-neurons, 
$$
F^+ = \frac{1}{M} (f_E * C_{EE} + \lambda^E) \qquad \mbox{and} \qquad
F^- =  \frac{1}{M} \ f_I * C_{EI}\ ,
$$
and for I-neurons,
$$
F^+ = \frac{1}{M} (f_E * C_{IE} + \lambda^I)  \qquad \mbox{and} \qquad
F^- = \frac{1}{M} \ f_I * C_{II}\ .
$$
Here $f_E$ and $f_I$ are to be thought of as mean E- and I-firing rates of the population. 
 
The mean excitatory and inhibitory 
firing rates $f^{(1)}_E$ and $f^{(1)}_I$ of this reduced model are 
defined to be the values of $f_E$ and $f_I$ 
that satisfy the self-consistency condition that when these values are
plugged into the equations above, they produce the same firing rates 
(the number of times  per sec $v$ in (\ref{voltage}) reaches $1$). 
They can be computed explicitly as follows:
 
\begin{lem} The values $(f^{(1)}_E, f^{(1)}_I)$ are uniquely defined and
are given by the formulas below, provided the quantities on the right side
are $\ge 0$:
\begin{eqnarray} \label{meanrate}
  f^{(1)}_{E} &=& \frac{\lambda^{E}(M + C_{II}) - \lambda^{I}C_{EI} }{(M - C_{EE})(M +
  C_{II}) + (C_{EI}C_{IE})} \\\nonumber
 f^{(1)}_{I} &=& \frac{\lambda^{I}(M - C_{EE}) + \lambda^{E}C_{IE} }{(M - C_{EE})(M +
  C_{II})+ (C_{EI}C_{IE})} \,,
\end{eqnarray}
\end{lem}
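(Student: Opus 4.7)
The plan is to reduce the self-consistency condition to a $2 \times 2$ linear system in $(f_E, f_I)$ and solve it by Cramer's rule.

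First I would compute the firing rate of a single neuron obeying \eqref{voltage} as a function of $F^+$ and $F^-$. Because $v$ drifts upward at constant speed $F^+ - F^-$ from $0$ and is reset instantaneously to $0$ upon reaching $1$ (with no refractory period), the interspike interval is exactly $1/(F^+ - F^-)$, so the firing rate equals $F^+ - F^-$, provided this quantity is positive.

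Next I would substitute the expressions for $F^+$ and $F^-$ from Sect. 4.1 into the self-consistency equations $f_E = F^+_E - F^-_E$ (for an E-neuron) and $f_I = F^+_I - F^-_I$ (for an I-neuron). Multiplying through by $M$ and rearranging gives the linear system
\begin{equation*}
(M - C_{EE}) f_E + C_{EI} f_I = \lambda^E, \qquad
-C_{IE} f_E + (M + C_{II}) f_I = \lambda^I.
\end{equation*}
The determinant of the coefficient matrix is
\begin{equation*}
\Delta := (M - C_{EE})(M + C_{II}) + C_{EI} C_{IE}.
\end{equation*}

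Then I would apply Cramer's rule to read off the stated formulas for $f^{(1)}_E$ and $f^{(1)}_I$. Uniqueness follows as soon as $\Delta \neq 0$; I would observe that in the regime where the right-hand sides of \eqref{meanrate} are nonnegative (the regime stipulated in the lemma), the natural biological condition $M > C_{EE}$ holds, and together with $C_{II}, C_{EI}, C_{IE} \geq 0$ this forces $\Delta > 0$. Hence the linear system has a unique solution given by the formulas claimed.

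The only mild subtlety is that the derivation presumes $F^+ > F^-$ so that the ODE indeed produces a finite positive firing rate; this is exactly what is encoded by the proviso that the right-hand sides of \eqref{meanrate} are $\geq 0$. No other obstacle arises, as the entire argument is a direct linear-algebra computation once the firing rate formula $F^+ - F^-$ is in hand.
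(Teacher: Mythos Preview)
Your reduction to a $2\times 2$ linear system and the application of Cramer's rule are exactly right, and this is the computation the paper has in mind (the lemma is stated without proof, and the same linear equation reappears as equation \eqref{linearE} when the refractory period is set to zero in the second reduced model).

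There is one small gap in your uniqueness argument. You assert that ``the natural biological condition $M > C_{EE}$ holds'' in order to conclude $\Delta > 0$. But for the parameters actually used in the paper, $C_{EE} = N_E P_{EE} S_{EE} = 300 \cdot 0.15 \cdot 5 = 225$, which exceeds $M = 100$; so $M - C_{EE} < 0$ here. The determinant $\Delta$ is nonetheless positive because the cross term $C_{EI}C_{IE}$ dominates (numerically $\Delta \approx 2.2 \times 10^4$), but this is not what you argued. You should either check $\Delta > 0$ directly from the given parameter values, or simply note that the lemma's formulas presuppose $\Delta \neq 0$ for the expressions to make sense, and leave positivity as a standing assumption alongside the stated nonnegativity proviso. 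The rest of your argument is fine.
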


\medskip
With $\lambda^E=\lambda^I$ as we have done in Section 3, it is easy to see that $f^{(1)}_E$ and $f^{(1)}_I$ increase linearly as functions of drive.  
 
\bigskip \noindent
(2) {\it Linear model with refractory}. This model is similar to the previous one, 
except for the presence of a (fixed) refractory period. That is to say,
here 
$$
\frac{dv}{dt} = F^+ - F^- \ , \qquad \mbox{for} \ v \in [0,1]\ ,
$$
except that every time $v$ reaches $1$ and is reset to $0$, it remains there
for a time interval of length $\tau_{\mathcal R}$ before resuming its linear climb. 
See Fig 3A (second from left).

The mean E- and I-firing rates of this model, $(f^{(2)}_E, f^{(2)}_I)$, are
then given by the pair $(f_E, f_I)$ satisfying the quadratic equations
\begin{eqnarray}
\label{quad}
 M * f_E & = & (1-\tau_{\mathcal R}f_{E})(f_E * C_{EE} + \lambda^E  - f_I *  C_{EI})\\
 M * f_I & = & (1-\tau_{\mathcal R}f_{E})(f_E * C_{IE} + \lambda^E  - f_I *  C_{II})\ .
\end{eqnarray}
Theoretically, these equations can be solved analytically. From the
first equation of \eqref{quad}, it is easy to see that
$$
  f_{I} = \frac{(1 - \tau_{\mathcal{R}} f_{E})(C_{EE} f_{E} + \lambda^{E}) - M f_{E}
  }{C_{EI}(1 - \tau_{\mathcal{R}} f_{E})}  \,.
$$
Putting $f_{I}$ into the second equation of \eqref{quad} and multiplying both
sides by $(1 - \tau_{\mathcal{R}} f_{E})^{2}$, we obtain a quartic equation 
for $f_{E}$ of the form 
$$
  A_{0}(\tau_{\mathcal{R}}) + A_{1}(\tau_{\mathcal{R}})f_{E} + A_{2}(\tau_{\mathcal{R}})f_{E}^{2} +
  A_{3}(\tau_{\mathcal{R}})f_{E}^{3} + A_{4}(\tau_{\mathcal{R}})f_{E}^{4} = 0 \,,
$$
where $A_{0} \sim A_{4}$ are polynomials in terms of
  $\tau_{\mathcal{R}}$. In particular, when $\tau_{\mathcal{R}} = 0$, this quartic equation reduces to the
linear equation
\begin{equation}
\label{linearE}
  [(M - C_{EE})(M + C_{II}) + (C_{EI}C_{IE})] f_{E} = \lambda^{E}(M +
  C_{II}) - \lambda^{I}C_{EI} \,,
\end{equation}
which produces $f_{E}^{(1)}$ in \eqref{meanrate}. It is
well known that quartic equations have a root formula, which is
unfortunately too complicated to be practical, but it
gives the existence of solution to the
quadratic system. In addition, for sufficiently small $\tau_{\mathcal{R}}$, the
quartic equation is a small perturbation of equation
\eqref{linearE}. By the intermediate value theorem, it is easy to show
that the quartic equation must admit a root that is close to
$f_{E}^{(1)}$. We leave this elementary proof to the reader.

\bigskip \noindent
(3) The $v$-{\it dependent} model. Here $v$ satisfies the same equation as before,
except that $S_{EI}$ and $S_{II}$ depend on the distance of $v$ to the reversal
potential. To separate the effects of refractory and $v$-dependence of synaptic
weights, let us assume for definiteness that there is no refractory period, that is
to say, all is as in the linear model except for the following: 
$$
S_{EI}(v) = \frac{Mv+ M_{r}}{M+ M_{r}}*S_{EI} \qquad \mbox{and} \qquad
S_{II}(v) =  \frac{Mv+ M_{r}}{M+ M_{r}}*S_{II}\ .
$$

For a given pair $(f_{E}, f_{I})$, this gives us two first order linear ODEs
$$
\frac{\mathrm{d}v_{E}}{\mathrm{d}t}  =   A_{E} - B_{E} v_{E}
\qquad \mbox{and} \qquad
\frac{\mathrm{d}v_{I}}{\mathrm{d}t} = A_{I} - B_{I} v_{I} \,,
$$
where $A_E, A_I, B_E$ and $B_I$ are easily computed from network
parameters. 
We let $t_{E}$ and $t_{I}$ be the times $v_E$ and $v_I$ first reaches $1$.
Then the desired spike rates $f_E$ and $f_I$  should  satisfy
$f_E = t_E^{-1}$ and $f_I=t_I^{-1}$. That is, the firing rates $f^{(3)}_E$ and
$f^{(3)}_I$ of this ODE model is the pair $(f_E, f_I)$ that solves
the two nonlinear equations
\begin{equation}
  \label{vdp}
1 = \frac{A_{E}}{B_{E}}(1 - e^{- B_{E}f_{E}^{-1}}) \quad, \quad 1 =
  \frac{A_{I}}{B_{I}}(1 - e^{- B_{I}f_{I}^{-1}}) \,.
\end{equation}
These equations can be solved numerically.

\bigskip
Needless to say, one can also consider the combined effects of (2) and (3),
to obtain a $v$-dependent model with refractory. In this case, equation
\eqref{vdp} becomes
$$
  1 = \frac{A_{E}}{B_{E}}(1 - e^{- B_{E}(f_{E}^{-1} - \tau_{\mathcal{R}})} ) \quad, \quad 1 =
  \frac{A_{I}}{B_{I}}(1 - e^{- B_{I}(f_{I}^{-1} - \tau_{\mathcal{R}})}) \,.
$$

\bigskip
Firing rates for the first two reduced models are shown in Fig. 3B (left) using
the parameters of the network models studied in Section 3. An immediate
observation is that the model with refractory has higher firing rates, which
may be somewhat counter-intuitive as the delay during refractory should, 
on the face of it, lead to lower firing rates. We have omitted
 the firing rates for the $v$-dependent model because they are ridiculously
low (close to $0$) and numerically unstable, and that requires an explanation
as well. (Please ignore the plot with open circles for now.) 

\medskip \noindent
{\bf Analysis.} The following is a heuristic explanation for $f^{(2)}_E > f^{(1)}_E$:
As is usually the case, I-firing rate is significantly higher than E-firing rate
in the models considered. With refractory, every time a neuron spikes, 
it ``misses" some amount of the incoming drive, the net value of which is positive. Since the fraction of drive ``missed" is proportional to the firing rate of a neuron
in these models, the I-neuron ``misses" a larger fraction of its drive than the
E-neuron. This may cause the system to become more excited than in the case 
of no refractory. (In the argument above we have taken into 
account first order effects only, ignoring the secondary effect that 
higher E-firing will boost I-firing.)

With regard to the v-dependent model, our analysis shows that the root
$(f^{(3)}_E, f^{(3)}_I)$ is very sensitive with respect to small
change of constants $B_E$ and $B_I$. Small errors in $B_E$ or $B_I$
caused by the inhomogeneous arrival of spikes are dramatically
amplified by the v-dependent model. 
As a result, the computed values are usually too low and not sufficiently reliable 
to be useful.

That begs the question then: why are firing rates in the network models so much
higher than in the $v$-dependent model, and so robust? 
We believe stochastic fluctuations
is the answer, and will study that in the next section.

\begin{figure}[htbp]
\centerline{\includegraphics[width = \linewidth]{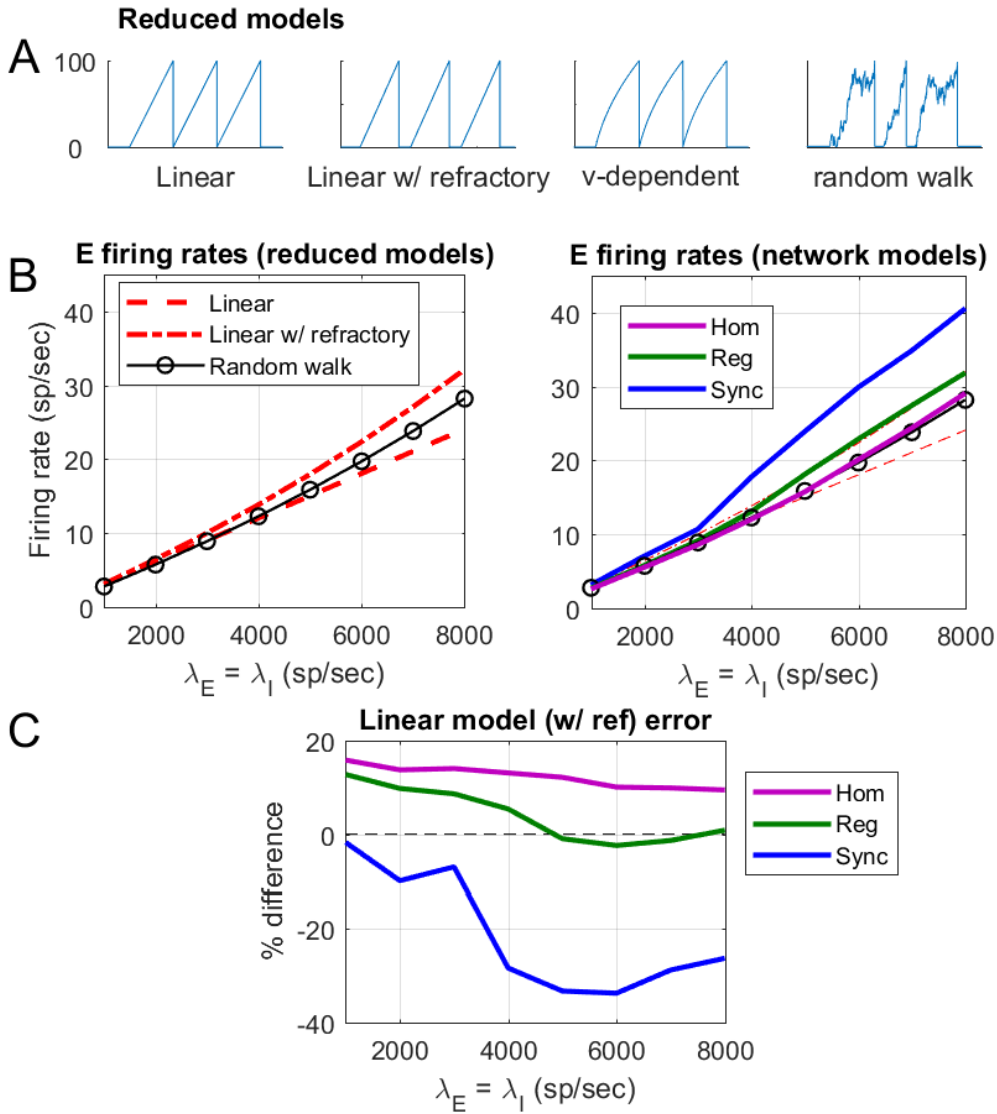}}
\label{fig3}
\caption{{\bf Comparison of firing rates.} {\bf A.}
Trajectories of the membrane potential $v$ as functions of time, for 4 reduced
models. From left to right: the linear model, the linear model with refractory, 
the $v$-dependent model, and the random-walk model considered in Sect. 5.
{\bf B.} The left panel shows graphs of E-firing rates as functions of drive
of the two linear reduced models, with and without refractory, and of the 
random walk model discussed in Section 5 (black with open circles). 
Firing rates of the $v$-dependent model
are omitted as they cannot be computed reliably. The right panel
shows firing rates of the network models (from Fig 1) superimposed on
the graphs from the left panel. {\bf C.} Percentage error if one uses the linear
model with refractory to predict firing rates of network models.
For example, $-20\%$ means the reduced model predicts a firing
rate $20\%$ lower than that of the network model. }
\end{figure}

\medskip
\subsection{Comparison of firing rates with network models.} 
We now compare the firing rates of the network models in Sect. 3 and the reduced models in Sect. 4.1. The right panel of Fig 3B shows the 
firing rates of the two linear reduced models (with and without refractory)
copied from the panel on the left and superimposed on the firing rates 
of the Hom, Reg and Sync models copied from Fig 1. We see immediately
that the linear reduced model
underestimates the firing rates of all three networks for moderate and strong
drives; and the linear model
with refractory, which has higher firing rates as explained earlier, underestimates
the firing rates of the Sync model and overestimates that of the Hom model.
Fig. 3C gives the percentage errors if the linear model with refractory was used to 
predict the firing rates of the network models. It confirms what is shown in
Fig 3B (right).

For definiteness, we now focus on a single reduced model, namely 
the linear model with refractory, and refer to it simply as 
``the reduced model" in the rest of this section.
There are likely many reasons why network firing rates are not in total agreement
those of this reduced model. We will focus on two of them: correlated
spiking in the form of partial synchronization as depicted in Figure 2, and the 
$V$-dependence of I-currents. The reason correlated spiking, or synchronization,
may be relevant is that in this reduced model, the arrival of synaptic input to 
a neuron is assumed to be homogeneous in time. Indeed such an assumption is implicit (or explicit) in most reduced models, 
even though it is in direct contradiction to correlations in spiking or partial 
synchronization, phenomena that are well known to occur in the real brain.

One of the effects of correlated spiking is that a disproportionately large
fraction of synaptic input may be missed during refractory.
Some statistics pertinent to our investigation are shown in Fig 4. 
The bar graphs in Fig 4A show the percentages
of E and I-spikes missed during refractory. Here we have distinguished between
spikes from interactions among neurons within the population and from external
drive. As external drive is assumed to be constant in time, 
one may equate the percentage
of spikes from external drive missed with the percentage of time a neuron
spends in refractory. As expected, the percentages of
E- and I-spikes missed in the Hom network are reasonably close to those 
predicted by the reduced model. In the Sync network, 
the percentages of synaptic input
missed are considerably higher than the percentage of time spent in refractory, 
consistent with the fact that spike times in this network are strongly correlated; 
see the correlation diagrams in Fig 2. 
That a smaller percentage of I-spikes are missed than E-spikes
is likely due to $\tau^I$ being large relative to $\tau_{\mathcal R}$, so 
more I-spikes arrive after the neuron leaves refractory. 
Fig 4B shows the empirical mean values of $V$ in the three network models,
which are well above the mean values of $V$ when I-spikes take effect. 

\begin{figure}[htbp]
\centerline{\includegraphics[width = \linewidth]{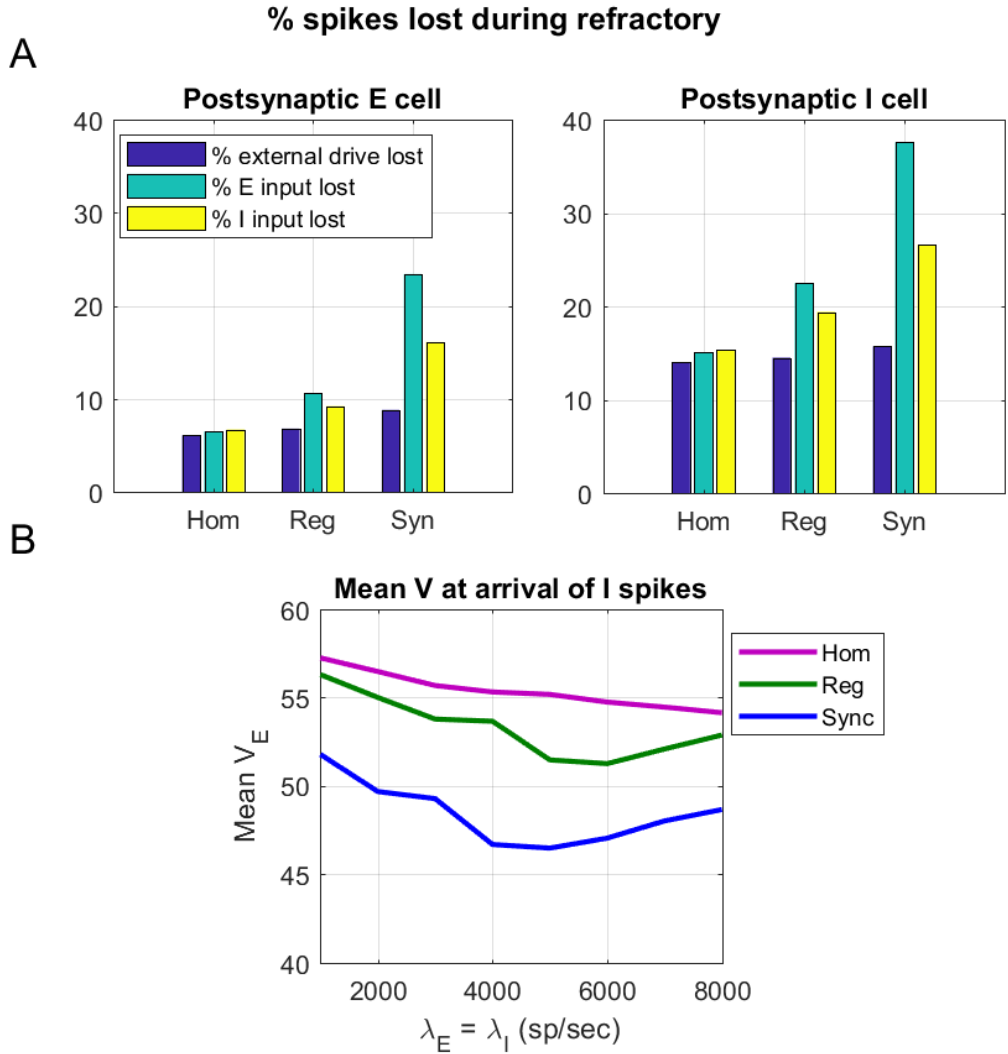}}
\label{fig4} 
\caption{{\bf Challenging the homogeneity of drive assumption in reduced models.} {\bf A.} Percentages of external drive, E and I-spikes 
missed during refractory, for E and I neurons, for the Hom, Reg and Sync networks,
for postsynaptic E-cells (left) and postsynaptic I-cells (right).
The percentage of external drive missed can be taken to be $\%$ time spent 
in refractory; E and I input here refer to synaptic input from within the population. 
{\bf B.} Mean $V$-values of E neurons (dashed) and mean $V$-values
(solid) when I-spikes take effect, 
as functions of drive, for the three networks. Corresponding graphs for I-neurons
are qualitatively similar.}
\end{figure}

\bigskip \noindent
{\bf Analysis.} We now attempt to explain the deviations of network firing
rates from those predicted by the linear model with refractory.

\medskip \noindent
(a) {\it Discrepancy caused by lack of $V$-dependence of I-currents.} 
Our reduced model used $S_{EI}$ and $S_{II}$ values that correspond to 
network values at $V=M/2 =50$. This choice is based on the assumption
that $V$ marches at constant speed from reset to threshold, and $I$-spikes
arrive in a time-homogeneous way, neither of which is true and the situation
is complicated: That I-spikes are stronger for larger $V$ should cause mean $V$ 
to be $>50$, but strong synchronization may cause more 
I-spikes to arrive when $V$ is lower. Indeed according to Fig 4B, at drive=7000 sp/s,
the mean $V$-value when I-spikes take effect are $\sim 54.5, 52$ and $48$ respectively
for the Hom, Reg and Sync networks. This means
using $S_{EI}$ and $S_{II}$ values at $V=M/2$
underestimates the mean values of these parameters for 
the Hom and Reg networks, and overestimates it for the Sync network. Underestimating $S_{EI}$ means that the network is
in fact more suppressed than this reduced model suggests. To summarize:
based on this one property alone, we would expect the reduced model to have
a higher firing rate than the Hom and Reg networks (with a smaller error
for the Reg network) and to have a lower firing rate than the Sync network.

\medskip \noindent
(b) {\it The effects of partial synchronization working in concert with refractory.} 
Because the arrival of synaptic input is not necessarily homogeneous in time,
the fraction of E and I-spikes ``missed" during refractory can be nontrivially
altered by partial synchronization. There is no easy way to predict the net
effect of this phenomenon, however, because it involves both E- and I-inputs missed by both E and I neurons, leading to a not-so-simple cancellation problem.
 
Consider first E-neurons. Suppose an {\it additional} fraction 
$\varepsilon_{EE}$ of E-input, and an additional fraction $\varepsilon_{EI}$ 
of I-input, to E-neurons are lost during refractory -- ``additional" in the sense
that it is above and beyond what is assumed to be lost during refractory
under the homogeneity of input assumption. Then compared to the reduced
model, there is a net gain in (positive) current in the amount of
$$
\Delta F_E \ = \ \varepsilon_{EI} f_I C_{EI} - \varepsilon_{EE} f_E C_{EE} 
= 172 \ \varepsilon_{EI} f_I - 225 \ \varepsilon_{EE} f_E \ .
$$
(This number can be positive or negative.) For an I-neuron, net gain relative to the reduced model is
$$
\Delta F_I \ = \ \varepsilon_{II} f_I C_{II} - \varepsilon_{IE} f_E C_{IE} 
= 137 \ \varepsilon_{II} f_I - 300 \ \varepsilon_{IE} f_E \ .
$$

For the Sync model, we see from Fig 4A that
$$
\varepsilon_{EI} \approx    8 , \quad 
\varepsilon_{EE} \approx 14.5 , \quad
\varepsilon_{II} \approx  11 , \quad
\varepsilon_{IE} \approx 22\ .
$$
With $\frac32 f_E< f_I < 3 f_E$ (Fig 1), it is easy to see that $\Delta F_I$ is
significantly more negative than $\Delta F_E$. That is to say, synchronization
causes I-neurons to lose more (positive) input current than E-neurons, 
so the system should be more excited and E-firing rate should be higher than predicted by the
reduced model. 

An analogous computation gives the same conclusion for the Reg network, 
but the difference between $\Delta F_E$ and $\Delta F_I$ is smaller. 
The $\varepsilon_{QQ'}$ values for the Hom network are too small to be significant.

\medskip
Combining (a) and (b), we expect that the linear model with refractory
will give E-firing rates that are higher than the Hom network (counting only
the error from (a)), and lower than the Sync network (errors from both (a) and (b)). 
As for the Reg network, the two errors from (a) and (b) have opposite signs; 
one cannot say what it will be on balance except that it is likely to be smaller 
than the other two. This is consistent with the results in Fig 3B (right).

\bigskip \noindent
{\it A general remark.} We have found  that $\Delta F_I$ is generally larger 
in magnitude than $\Delta F_E$ when input
currents are changed, due simply to the fact that I-neurons have higher firing rates,
we believe. These changes depend on the composition of the current that is altered,
however, and in the situation above that depends on the relative speeds at which
E and I spikes take effect, i.e., on $\tau^E$ and $\tau^I$. A complete analysis
of that is beyond the scope of this paper. We believe a separate mathematical model about the interaction
  of E and I spikes during a spiking event is necessary to address
  this issue. Thus while we have often
seen that synchronization leads to higher firing rates, we do not know if this is
always the case, or the conditions under which this is
true.

\bigskip
\section{Modeling membrane potentials as random walks}  

In this section, we consider a different kind of reduced model, namely one
in which membrane potentials of E and I neurons are modeled as (biased) 
random walks with reset at threshold. This model is in part motivated
by the fact that reduced models defined by ODEs cannot reproduce the
statistics of events observed in populations of interacting neurons, 
and that stochastic fluctuations -- or population activity that give rise 
to behaviors that resemble stochastic 
fluctuations -- seem to play a role in neuronal dynamics. 

\subsection{A random walk model and its firing rate.}

Here we model the membrane potentials of an E and an I-neuron by a
continuous time Markov jump process $(X^E_t, X^I_t)$ where $X^E_t$ and 
$X^I_t$ are independent and each takes values in the state space
$\{-M_{r}, \cdots, M-1, \mathcal R\}$. For definiteness we will consider a model
that incorporates both refractory periods and the $V$-dependence of I-currents.

Given a pair $(f_E, f_I)$ which represents the firing rates of excitatory 
and inhibitory neurons from the local population, we assume
that $X^E_t$ is driven by three independent Poisson processes that correspond
to (i) external drive, (ii) excitatory and (iii) inhibitory synaptic inputs from the population. The Poisson process corresponding to (i) delivers kicks
at rate $\lambda^E$. The ones corresponding to (ii) and (iii)
have rate $N_E P_{EE} f_E$ and $N_I P_{EI} f_I$ respectively. 
Upon receiving a kick from the external drive,
$X^t_E$ moves up by $1$. Upon receiving a kick from (ii), $X^E_t$ jumps up by $S_{EE}$ slots, and upon receiving a kick from (iii), it jumps down by 
$S_{EI}(X^E_t)$ slots. The interpretation of non-integer
numbers of slots and the $X^E_t$-dependence of $S_{EI}$ are as in
Sect. 1.1. Also as before, when $X^E_t$ reaches $M$, it goes to $\mathcal R$, 
where it remains for an exponential time of mean $\tau_{\mathcal R}$.
The process $X^I_t$ is defined analogously.

It is well known that an irreducible Markov jump process on a finite
state space admits a unique stationary distribution. 
Given $(f_E, f_I)$, let $\nu_Q$ denote the stationary distribution of $X^Q_t$
for $Q=E,I$. Clearly, $\nu_{Q}$ is a computable distribution satisfying
\begin{equation}
  \label{linear}
\left \{ 
\begin{array}{l}
  \mathbf{A}_{Q} \nu_{Q} = \mathbf{0}\\
\mathbf{1}^{T} \nu_{Q} = 1
\end{array}
\right .
\end{equation}
where $\mathbf{A}_{Q}$ is the generator matrix of process $X^{Q}_{t}$, and
$\mathbf{1}$ is a vector in $\mathbb{R}^{M + M_{r} + 1}$ all of whose entries are equal to $1$. The {\it firing rate} of $X^{Q}_{t}$, $Q\in \{E, I\}$, can be defined as
$$
  \tilde{f}_{Q} = \lim_{T \rightarrow \infty}\frac{1}{T}\#\{ t \in (0, T) \,|\,
 X^{Q}_{t-} \neq \mathcal R, X^{Q}_{t} = \mathcal R \} \,.
$$
It is easy to see that
$$
  \tilde{f}_{Q} = N_{E}P_{QE}f_{E}\sum_{i = M - S_{QE}}^{M-1}\nu_{Q}(i) +
  \lambda^{Q}\nu_{Q}(M-1) \,.
$$

Of interest to us is $(f_E,f_I)$ satisfying the consistency condition
$(f_E,f_I)=(\tilde f_E,\tilde f_I)$. We prove the existence of a solution
to this consistency equation.

\begin{thm}
\label{rwrate}
There exist $f_{E}, f_{I}>0$ such that when
$(X^{E}_{t},X^{I}_{t})$ is driven by these firing rates, they produce
mean firing rates $\tilde f_{E}$ and $\tilde f_{I}$ such that
$$
  \tilde{f}_{E} = f_{E}, \quad \tilde{f}_{I} = f_{I} \,.
$$
\end{thm}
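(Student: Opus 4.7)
The plan is to recast the consistency condition as a fixed-point problem on a compact convex set and apply Brouwer's fixed point theorem. For each pair $(f_E, f_I) \in D := [0, 1/\tau_{\mathcal{R}}]^2$, the generator $\mathbf{A}_Q$ in (\ref{linear}) depends affinely on $(f_E, f_I)$; solving (\ref{linear}) produces the stationary distributions $\nu_E, \nu_I$, and the formula for $\tilde f_Q$ displayed just before the theorem then defines
$$
\Phi(f_E, f_I) := (\tilde f_E, \tilde f_I).
$$
Any fixed point of $\Phi$ in $D$ is the solution sought.

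First I would check that $\Phi$ is a well-defined self-map of $D$. Uniqueness of $\nu_Q$ follows from the fact that $\lambda^Q > 0$: from any state, a sequence of external kicks drives $X^Q_t$ to $M-1$, it spikes to $\mathcal{R}$, and then returns to $0$, so $X^Q_t$ has exactly one recurrent class and $\mathbf{A}_Q$ has one-dimensional kernel for every $(f_E, f_I) \in D$. Non-negativity of $\tilde f_Q$ is immediate. For the upper bound, note that between consecutive spikes the chain must linger in $\mathcal{R}$ for an exponential waiting time of mean $\tau_{\mathcal{R}}$, so the mean inter-spike interval is at least $\tau_{\mathcal{R}}$ and $\tilde f_Q \leq 1/\tau_{\mathcal{R}}$; this is the same Poisson-domination argument used in the proof of Corollary \ref{cor1}.

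Next I would establish continuity of $\Phi$ on $D$ and conclude. Because the entries of $\mathbf{A}_Q$ are affine in $(f_E, f_I)$ and $\dim \ker \mathbf{A}_Q = 1$ throughout $D$, Cramer's rule applied to (\ref{linear}) expresses $\nu_Q$ as a rational function of $(f_E, f_I)$ with non-vanishing denominator, hence continuous; the formula for $\tilde f_Q$ then makes $\Phi$ continuous on all of $D$. The main subtlety, and what I expect to be the chief obstacle, is the boundary behavior at $f_I = 0$, where the negative voltage states become transient and $\nu_Q$ concentrates on $\{0, \ldots, M-1, \mathcal{R}\}$; here the Cramer representation still applies because only the one-dimensional-kernel condition is used, not full irreducibility, so continuity survives the collapse of support. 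With $\Phi$ a continuous self-map of the nonempty compact convex set $D \subset \mathbb{R}^2$, Brouwer's fixed point theorem provides $(f_E, f_I) \in D$ with $\Phi(f_E, f_I) = (f_E, f_I)$, which is the statement of Theorem \ref{rwrate}; positivity of the fixed point is automatic from $\tilde f_Q \geq \lambda^Q \nu_Q(M-1) > 0$, since $M-1$ lies in the recurrent class.
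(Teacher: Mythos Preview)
Your proof is correct and follows essentially the same strategy as the paper: both obtain a fixed point on the box $[0,\tau_{\mathcal{R}}^{-1}]^2$ via a topological argument, with the paper invoking the Poincar\'e--Miranda theorem on $(\tilde f_E - f_E,\ \tilde f_I - f_I)$ where you apply Brouwer directly to the self-map $(f_E,f_I)\mapsto(\tilde f_E,\tilde f_I)$. Your treatment of continuity and of the degenerate boundary $f_I=0$ is more explicit than the paper's, which takes these points for granted.
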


\begin{proof}
Let
$$
  \phi_{1}(f_{E}, f_{I}) = \tilde{f}_{E} - f_{E} \quad \mbox{and}
  \quad \phi_{2}(f_{E}, f_{I}) = \tilde{f}_{I} - f_{I} \,.
$$
Observe that for any $f_I \ge 0$,
$$
\phi_{1}(0, f_{I}) > 0 \qquad \mbox{and} \qquad
\phi_{1}(f_{E}, f_{I}) < 0 \quad \mbox{when} \quad f_{E} > \tau_{\mathcal R}^{-1}\ .
$$
The first inequality is true because independently
of how high a rate the inhibitory clock rings, there exists $T_0>0$ and 
$\varepsilon >0$ (depending on $f_I$) such that starting from anywhere 
in $\Gamma$, external drive alone will, with probability $\ge \varepsilon$,
cause $X^E_t$ to spike within $T_0$ units of time, rendering
$\tilde f_E>0$. The second inequality is true because
each time $X^E_t$ spikes, it has to spend time in refractory, 
so $\tilde f_E \le \tau_{\mathcal R}^{-1}$. 
Similarly, observe that for any $f_E \ge 0$, 
$$
\phi_{2}(f_{E}, 0) > 0 \qquad \mbox{and} \qquad
\phi_{2}(f_{E}, f_{I}) < 0 \quad \mbox{when} \quad f_{I} > \tau_{\mathcal R}^{-1}\ .
$$
By the Poincare-Miranda Theorem (a version of
intermediate value theorem in dimensions greater than one), 
we have the existence of a solution
$$
  \phi_{1}(f_{E}, f_{I}) = 0, \quad \phi_{2}(f_{E}, f_{I}) = 0 \,.
$$
Moreover, from the boundary conditions above, we have that
$f_E, f_I >0$.
\end{proof}

\begin{figure}[htbp]
\centerline{\includegraphics[width = \linewidth]{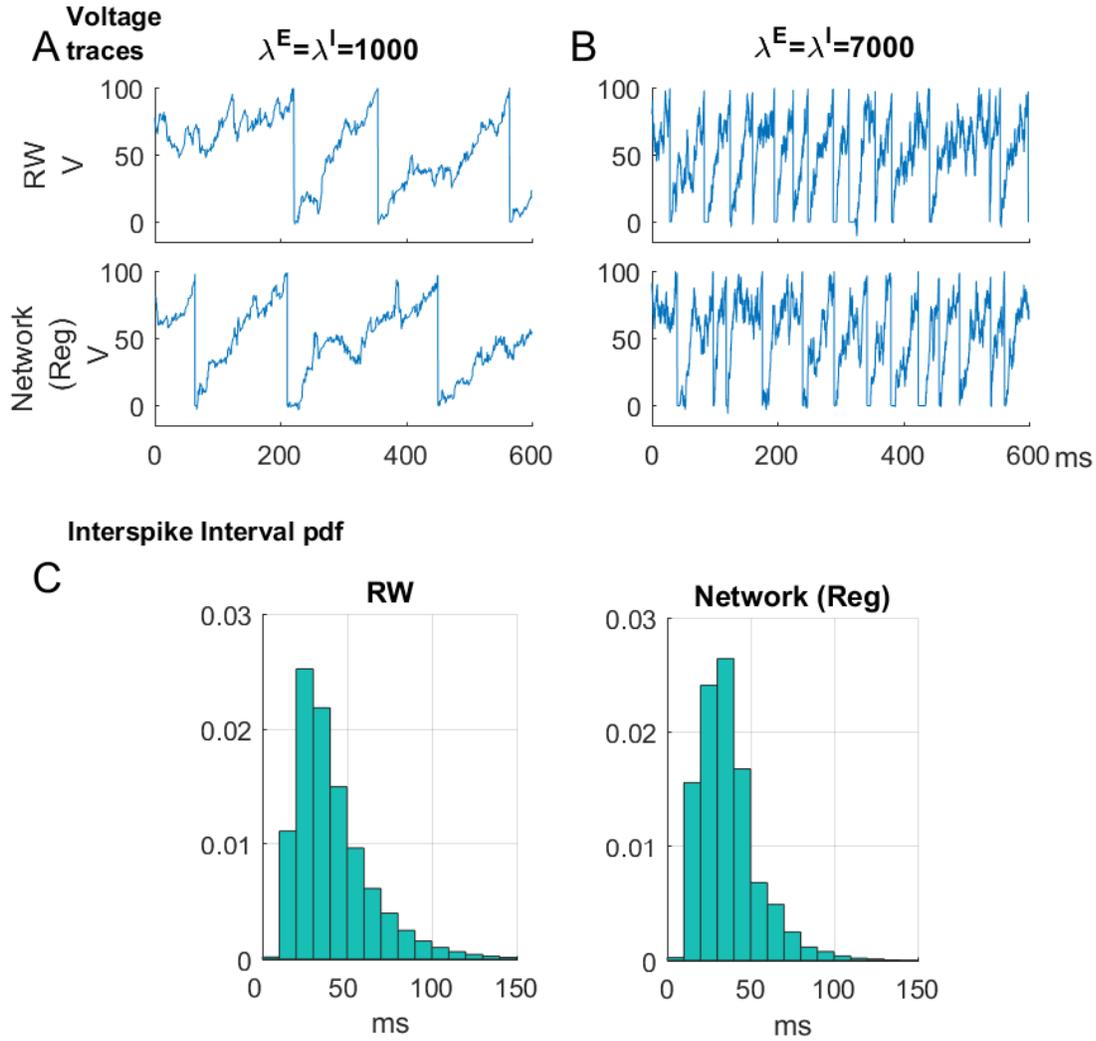}}
\label{fig5} 
\caption{{\bf Trajectory of single neuron in network model vs. random walk.}
  A-B. Sample paths of $X^E_t$ are shown above traces of membrane potential
of a randomly chosen neuron in the Reg network. A: background. B: strongly
driven.  C. Empirical distributions of interspike intervals for $X^E_t$ (left) and
for a neuron from the Reg network. In the definition of $X^E_t$,
the same parameters as in the network models are used in the
Poisson processes, and the firing rates used are $(f^{rw}_E, f^{rw}_I)$
from Theorem 5.1.}
\end{figure}

Let $f_{E}^{rw}$ and $f_{I}^{rw}$ denote the mean firing rates obtained 
from Theorem \ref{rwrate}. They were found to be unique in our numerical 
simulations, and very close to the empirical firing rates of the Hom network. 
See Fig 3B, the graph in black with open circles. That this graph is close
to the one for the Hom network and somewhat below those of the Reg and Sync network models is consistent with our analysis in Sect. 4.2: Here we have
corrected the error due to $V$-dependence but not the one due to the
combined action of synchronization and refractory. As explained in Sect. 4.2, 
such action causes E-firing rates of the Sync and Reg networks to be higher
than that predicted under the assumption that the arrival of synaptic input is homogeneous in time. 

Sample paths of $X^E_t$ in background ($\lambda^E=\lambda^I=1000$) and
when strongly driven ($\lambda^E=\lambda^I=7000$) are shown in Fig 5A,B (top).
We have included for comparison voltage traces of an E-neuron from the 
Reg model at corresponding drives directly below. These traces are very similar and
not easily discernible by eye.

\medskip
\subsection{Interspike intervals.}

As noted earlier, the random walk (rw) model above has the capability of
producing statistics that may emulate those in network models, something
the reduced ODE models studied earlier cannot do. In this subsection,
we focus on the distribution of interspike times, i.e., the times between
consecutive spikes fired by a neuron. Histograms of interspike times from the rw
model and the Reg network model are shown in Fig 5C. 
While not identical, they bear a clear resemblance.

Below we propose an explicit distribution that will be shown numerically to 
approximate well the distribution of interspike times for the rw model. 
We will then apply these ideas to the network models, and see how they fare. 

\bigskip \noindent
{\bf Approximation of first passage times of the rw model by inverse Gaussians.}
For convenience, we consider a rescaling of $(X^E_t, X^I_t)$ in which
the interval $[0,M]$ is scaled linearly to  $[0,1]$, with jump sizes scaled accordingly.
(We may assume for purposes of this discussion that jump sizes are given
by $S_{QQ'}/M$ whether or not $S_{QQ'}$ is an integer.)
Let us call this rescaled rw model $(Y^E_t, Y^I_t)$,
and assume throughout that the population firing rates $(f_E, f_I)$
are those obtained from
Theorem 5.1. Let us also agree to ignore the time spent in refractory, 
which is entirely irrelevant in this discussion.

The random variables of interest, then, are $T^{rw}_E$ and $T^{rw}_I$, 
the first passage times of $Y^E_t$
and $Y^I_t$ to $1$ starting from $Y^E_0, Y^I_0=0$. For definiteness,
we will work with $Y^E_t$; the analysis of $Y^I_t$ is entirely analogous.
Below we make a sequence of approximations that will result in 
an explicit distribution to be compared to that of $T^{rw}_E$.
 
\bigskip \noindent
(i) For a small time interval $\mathrm{d} t$, we have
$$
  Y^{E}_{t + \mathrm{d} t} \approx Y^{E}_{t} + G^{E}(Y^{E}_{t}, \mathrm{d}t) \,,
$$
where
$$
  G^{E}(Y_{t}, \mathrm{d}t) = \frac{S_{EE}}{M} \mathrm{Pois}(N_{E}f_{E}P_{EE}
  \mathrm{d}t) - \frac{S_{EI}}{M}(MY^{E}_{t}) \mathrm{Pois}(N_{I}f_{I}P_{EI}
  \mathrm{d}t) + \frac{1}{M} \mathrm{Pois}( \lambda^{E} \mathrm{d}t) \,,
$$
where $\mathrm{Pois}(\lambda)$ is a Poisson random variable with
parameter $\lambda$. All Poisson random variables are assumed to be
independent. It is easy to see that
$$
  \mathbb{E}[ G^{E}( Y^{E}_{t}, \mathrm{d}t)] = \frac1M 
  \left(S_{EE}N_{E} f_{E} P_{EE} -
  S_{EI}(M Y^{E}_{t})N_{I}f_{I}P_{EI} + \lambda^{E} \right) \mathrm{d}t := b_{E}(Y^{E}_{t})\mathrm{d}t
$$
and
\begin{eqnarray*}
  \mathrm{Var}[ G^{E}( Y^{E}_{t}, \mathrm{d}t)] &=& \frac{1}{M^2}\left(S_{EE}^{2} N_{E} f_{E} P_{EE} +
  S_{EI}(M Y^{E}_{t})^{2} N_{I}f_{I}P_{EI} +                                                     \lambda^{E}\right) \mathrm{d}t \\
&:=&
\sigma^2_{E}(Y^{E}_{t}) \mathrm{d}t\,.
\end{eqnarray*}

\medskip \noindent
(ii) Next we approximate $G^{E}(Y^{E}_{t}, \mathrm{d}t)$ by a random variable 
$\hat{G}^{E}(\mathrm{d}t)$ that is independent of $Y^E_t$. Specifically we seek
$\hat{G}^{E}(\mathrm{d}t)$ with the property that
$$
  \mathbb{E}[ \hat{G}^{E}(\mathrm{d}t)] = f_E \mathrm{d}t \qquad \mbox{and}
\qquad \mathrm{Var}[\hat{G}^{E}(\mathrm{d}t)] \approx \mathrm{Var}[ G^{E}(
  1/2, \mathrm{d}t)] := \hat{\sigma}_{E}^{2} \mathrm{d}t\ ,
$$
i.e.,
$$
  \hat{\sigma}_{E} = \frac{1}{M}\sqrt{S^{2}_{EE}N_{E}f_{E}P_{EE} +
    S_{EI}(M/2)^{2}N_{I}f_{I}P_{EI} + \lambda^{E} } \,.
$$

We leave it to the reader to check that the following might be a candidate:
\begin{eqnarray*}
  \hat{G}^{E}( \mathrm{d}t) &=& (\frac{S_{EE}}{M} + \epsilon)\mathrm{Pois}(N_{E}f_{E}P_{EE}
  \mathrm{d}t) - ( \frac{S_{EI}(M/2)}{M} - \epsilon) \mathrm{Pois}(N_{I}f_{I}P_{EI}
  \mathrm{d}t) \\
&&+ (\frac{1}{M }+ \epsilon) \mathrm{Pois}( \lambda^{E} \mathrm{d}t)  \,
\end{eqnarray*}
where
$$
\epsilon = \frac{f_{E} - 
  b_{E}(1/2)}{M(N_{E}f_{E}P_{EE} + N_{I}f_{I}P_{EI} +
  \lambda^{E})} \,.
$$

\medskip \noindent
(iii) It is well known that a Poisson distribution $\mathrm{Pois}(\lambda)$ is
approximated by $N(\lambda, \lambda)$ where $N(\cdot, \cdot)$ is the
normal distribution when $\lambda$ is large (usually
larger than $10$). In our model, the three constants $N_{E}f_{E}P_{EE}$, 
$N_{I}f_{I}P_{EI}$, and $\lambda^{E}$ are $>10^{3}$ for strong drive. 
Under these conditions, for $\mathrm{d}t>0.01$, the three Poisson distributions 
can be approximated by
normal distributions. Since a linear combination of independent normal
random variables gives a normal random variable, we have the approximation
$$
  Y^{E}_{t + \mathrm{d}t} \approx Y^{E}_{t} + f_E \mathrm{d}t +
  \hat{\sigma}_{E} \mathrm{d} W_{ \mathrm{d}t} \,,
$$
where $\mathrm{d}W_{\mathrm{d}t} \sim N(0, \mathrm{d}t)$. 

\medskip \noindent
(iv) The formula above is the Euler-Maruyama scheme for
the stochastic differential equation
\begin{equation}
\label{sde}
  \mathrm{d}Z_{t} = f_{E} \mathrm{d}t + \hat{\sigma}_{E}
  \mathrm{d}W_{t}\quad , \quad
  Z_{0} = 0 \,.
\end{equation}
This  scheme is known to be strongly convergent, i.e., trajectories produced 
by the numerical scheme
converges to trajectories of \eqref{sde} when the step size approaches
to $0$ \cite{kloeden2013numerical}.  

\medskip \noindent
(v) Finally, for a (true) Brownian motion with a drift, given by
$$
f_E t + \hat \sigma_E W_t\ ,
$$
the first passage time to $1$ starting from $0$ is given by the inverse Gaussian  
$IG(f_E^{-1}, {\hat \sigma}_{E}^{-2})$ \cite{chhikara1988inverse}, where the inverse
Gaussian $IG(\mu, \nu)$ is the probability distribution with density
$$
\rho(x; \mu, \nu) = \left[\frac{\nu}{2\pi x^3}\right]^\frac12 \mbox{ exp}
\left\{\frac{-\nu(x-\mu)^2}{2\mu^2 x}\right\}\ .
$$

\medskip
We remark that we do not claim to have control over 
the cumulative errors in steps (ii), (iii) and (iv), and that the argument above is 
intended only to be heuristic. 
Numerically, it appears to be a good approximation,
as can be seen in Fig 6A, where we have plotted the first passage
times for $Y^E_t$ and $Y^I_t$ and their inverse Gaussian
approximations.

\bigskip \noindent
{\bf Interspike times in network models and inverse Gaussians.}
The idea here is to start from a network model, pass to its accompanying 
rw model, find the appropriate IG distribution as discussed above, and 
use that to approximate the interspike times of the network model. 
The excellent match between inverse Gaussians and the pdf of interspike times
for the Hom model can be seen in Fig 6B, both in background and under drive.
The performance of this recipe is likely to deteriorate with increased synchronization:
not only will synchronization change firing rate (hence the drift term in the approximation above) as discussed in Sect. 4.2, in the case of strongly synched
networks the semi-regularity of spiking events will likely appear as semi-regularly
spaced bumps in interspike time distributions. 
Still, in lower resolution the inverse Gaussian
appears to be a decent approximation  for the Reg network, as can
be seen in Fig 6C.

\begin{figure}[htbp]
\centerline{\includegraphics[width = \linewidth]{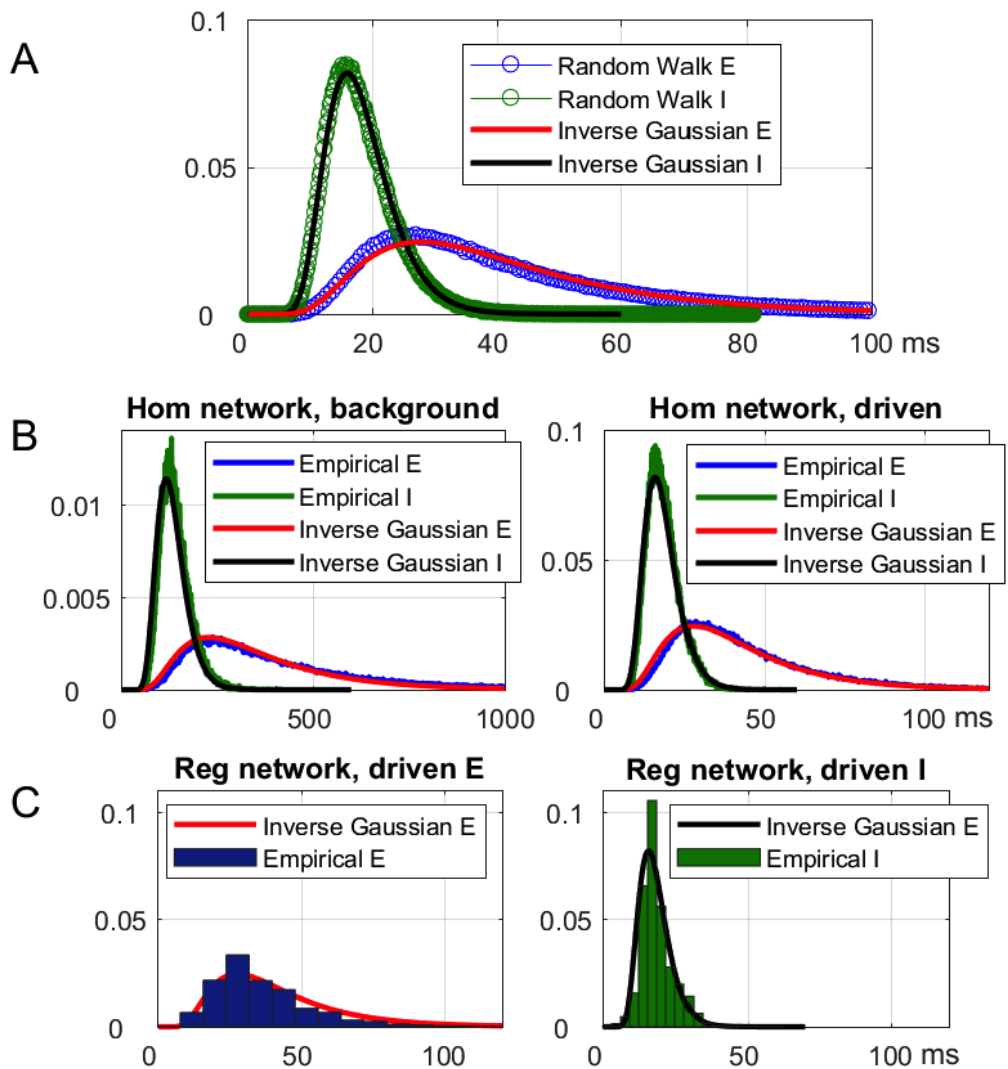}}
\caption{{\bf Inverse Gaussians as approximations for distributions of interspike
times.} A. Pdf of the inverse Gaussian distributions (solid lines, red for E, black for I)
and empirical first passage time for $Y^{E}_{t}$ and $Y^{I}_{t}$
(open circles, blue for E, green for I). For parameters of the IG-distributions, see
the main text.
B. Comparison of inverse Gaussian distribution (red/black) and and empirical
interspike times from the Hom network (blue/green). Left: background.
Right: strong drive. Resolution: 1000 bins. C. Comparison of inverse
Gaussian distribution (red/black) and and empirical interspike
times from the Reg network (blue/green) at much lower resolutions. 
Left: E-interspike times. Right: I-interspike times. 
bins.}
\label{fig6}
\end{figure}


\bigskip
\section{Summary and conclusion}

We introduced in Section 1 a family of stochastic networks
of interacting neurons that we believe will be of independent interest, but for
our purposes here, what is relevant is that these models have easily characterizable
firing rates and correlation properties, which emerge as a result of the dynamical
interaction among neurons.

We compared the firing rates of these network models to some very simple reduced
models defined by a pair of ODEs representing 
the membrane potentials of
an E and an I-neuron, taking care to give these neurons the mean excitatory and inhibitory currents received per unit time by E and I-neurons in 
the stochastic network models.

A property common to many reduced models including ours
is the underlying assumption that all inputs received by a neuron arrive
in a time-homogeneous way. This assumption contradicts directly 
the presence of correlations in neuronal spiking, which are
observed in network models as in the real brain. It is arguably the single
biggest difference between reduced and network models.

How exactly does the uneven arrival of input affect firing rate? 
Does the Ergodic Theorem not tell us that in the long run, the integral of 
the net current, roughly speaking, determines firing rate? The answer would
have been yes had it not been for the ``nonlinearities" present in the time 
evolution of membrane potentials. We have focused on 
two of these nonlinearities: refractory and the $V$-dependence of I-currents,
and our findings can be summarized as follows:

A neuron ``misses" a fraction of the spikes that it receives during refractory. 
The fraction missed is proportional to firing rate, so for a start 
refractory affects E and I neurons differently. The amount of current missed can be significant, and is exaggerated by synchronization. Furthermore, the composition 
of the missed current matters, and that has to do with 
relative spike times and conductance properties of E and I neurons.
Synchronization also changes of the membrane potential of the postsynaptic 
neuron when I-spikes arrive, altering the strength of the suppression.

We demonstrated clearly how these nonlinearities affected firing rates in 
three example networks, using simulations to confirm the reasoning above. In these and other examples we have studied, 
the effects can be considered modulatory except when the network 
is highly synchronized, and the errors caused by the two nonlinearities above 
can add or cancel. We believe these pictures are quite typical, but obviously cannot
draw more general conclusions.

In addition to the use of simple ODE models to predict firing rates, we have
found that simple random walk models (again using currents
from the network model to determine step sizes and biases) 
reproduce well fluctuations in membrane potentials 
and inter-spike time distributions of individual neurons in network models.

\bigskip
\bibliography{myref}

\providecommand{\bysame}{\leavevmode\hbox to3em{\hrulefill}\thinspace}
\providecommand{\MR}{\relax\ifhmode\unskip\space\fi MR }
\providecommand{\MRhref}[2]{%
  \href{http://www.ams.org/mathscinet-getitem?mr=#1}{#2}
}
\providecommand{\href}[2]{#2}
\begin{thebibliography}{10}

\bibitem{amit1997dynamics}
Daniel~J Amit and Nicolas Brunel, \emph{Dynamics of a recurrent network of
  spiking neurons before and following learning}, Network: Computation in
  Neural Systems \textbf{8} (1997), no.~4, 373--404.

\bibitem{amit1997model}
\bysame, \emph{Model of global spontaneous activity and local structured
  activity during delay periods in the cerebral cortex.}, Cerebral cortex (New
  York, NY: 1991) \textbf{7} (1997), no.~3, 237--252.

\bibitem{borgers2003synchronization}
Christoph B{\"o}rgers and Nancy Kopell, \emph{Synchronization in networks of
  excitatory and inhibitory neurons with sparse, random connectivity}, Neural
  computation \textbf{15} (2003), no.~3, 509--538.

\bibitem{brunel2000dynamics}
Nicolas Brunel, \emph{Dynamics of sparsely connected networks of excitatory and
  inhibitory spiking neurons}, Journal of computational neuroscience \textbf{8}
  (2000), no.~3, 183--208.

\bibitem{cai2006kinetic}
David Cai, Louis Tao, Aaditya~V Rangan, David~W McLaughlin, et~al.,
  \emph{Kinetic theory for neuronal network dynamics}, Communications in
  Mathematical Sciences \textbf{4} (2006), no.~1, 97--127.

\bibitem{cai2004effective}
David Cai, Louis Tao, Michael Shelley, and David~W McLaughlin, \emph{An
  effective kinetic representation of fluctuation-driven neuronal networks with
  application to simple and complex cells in visual cortex}, Proceedings of the
  National Academy of Sciences of the United States of America \textbf{101}
  (2004), no.~20, 7757--7762.

\bibitem{chariker2016orientation}
Logan Chariker, Robert Shapley, and Lai-Sang Young, \emph{Orientation
  selectivity from very sparse lgn inputs in a comprehensive model of macaque
  v1 cortex}, Journal of Neuroscience \textbf{36} (2016), no.~49, 12368--12384.

\bibitem{chariker2017rhythm}
\bysame, \emph{Rhythm and synchrony in recurrent cortical models}, Preprint
  (2017).

\bibitem{chariker2015emergent}
Logan Chariker and Lai-Sang Young, \emph{Emergent spike patterns in neuronal
  populations}, Journal of computational neuroscience \textbf{38} (2015),
  no.~1, 203--220.

\bibitem{chhikara1988inverse}
Raj Chhikara, \emph{The inverse gaussian distribution: Theory: Methodology, and
  applications}, vol.~95, CRC Press, 1988.

\bibitem{gerstner2000population}
Wulfram Gerstner, \emph{Population dynamics of spiking neurons: fast
  transients, asynchronous states, and locking}, Neural computation \textbf{12}
  (2000), no.~1, 43--89.

\bibitem{grabska2014well}
Agnieszka Grabska-Barwi{\'n}ska and Peter~E Latham, \emph{How well do mean
  field theories of spiking quadratic-integrate-and-fire networks work in
  realistic parameter regimes?}, Journal of computational neuroscience
  \textbf{36} (2014), no.~3, 469--481.

\bibitem{hairer2011yet}
Martin Hairer and Jonathan~C Mattingly, \emph{Yet another look at harris’
  ergodic theorem for markov chains}, Seminar on Stochastic Analysis, Random
  Fields and Applications VI, Springer, 2011, pp.~109--117.

\bibitem{haskell2001population}
Evan Haskell, Duane~Q Nykamp, and Daniel Tranchina, \emph{A population density
  method for large-scale modeling of neuronal networks with realistic synaptic
  kinetics}, Neurocomputing \textbf{38} (2001), 627--632.

\bibitem{henrie2005lfp}
J~Andrew Henrie and Robert Shapley, \emph{Lfp power spectra in v1 cortex: the
  graded effect of stimulus contrast}, Journal of neurophysiology \textbf{94}
  (2005), no.~1, 479--490.

\bibitem{kloeden2013numerical}
Peter~E Kloeden and Eckhard Platen, \emph{Numerical solution of stochastic
  differential equations}, vol.~23, Springer Science \& Business Media, 2013.

\bibitem{knight1996dynamical}
Bruce~W Knight, Dimitri Manin, and Lawrence Sirovich, \emph{Dynamical models of
  interacting neuron populations in visual cortex}, Robot Cybern \textbf{54}
  (1996), 4--8.

\bibitem{meyn2009markov}
Sean~P Meyn and Richard~L Tweedie, \emph{Markov chains and stochastic
  stability}, Cambridge University Press, 2009.

\bibitem{omurtag2000dynamics}
Ahmet Omurtag, Bruce~W Knight, and Lawrence Sirovich, \emph{Dynamics of
  neuronal populations: The equilibrium solution}, SIAM Journal on Applied
  Mathematics \textbf{60} (2000), no.~6, 2009--2028.

\bibitem{rangan2006maximum}
Aaditya~V Rangan and David Cai, \emph{Maximum-entropy closures for kinetic
  theories of neuronal network dynamics}, Physical review letters \textbf{96}
  (2006), no.~17, 178101.

\bibitem{rangan2013dynamics}
Aaditya~V Rangan and Lai-Sang Young, \emph{Dynamics of spiking neurons: between
  homogeneity and synchrony}, Journal of Computational Neuroscience \textbf{34}
  (2013), no.~3, 433--460.

\bibitem{rangan2013emergent}
\bysame, \emph{Emergent dynamics in a model of visual cortex}, Journal of
  Computational Neuroscience \textbf{35} (2013), no.~2, 155--167.

\bibitem{van1998chaotic}
Carl van Vreeswijk and Haim Sompolinsky, \emph{Chaotic balanced state in a
  model of cortical circuits}, Neural computation \textbf{10} (1998), no.~6,
  1321--1371.

\bibitem{wilson1972excitatory}
Hugh~R Wilson and Jack~D Cowan, \emph{Excitatory and inhibitory interactions in
  localized populations of model neurons}, Biophysical journal \textbf{12}
  (1972), no.~1, 1--24.

\bibitem{wilson1973mathematical}
\bysame, \emph{A mathematical theory of the functional dynamics of cortical and
  thalamic nervous tissue}, Biological Cybernetics \textbf{13} (1973), no.~2,
  55--80.

\end{thebibliography}
\bibliographystyle{amsplain}
\end{document}